\newcommand{\appendixnumberline}[1]{Appendix #1.\space}
\let\oldappendix\appendix
\renewcommand{\appendix}{%
  \addtocontents{toc}{\let\protect\numberline\protect\appendixnumberline}%
  \renewcommand{\@seccntformat}[1]{
  \bfseries Appendix \Alph{section}. }%
  \oldappendix
}
\theoremstyle{plain}
\newtheorem{prop}{Proposition}
\definecolor{navycol}{RGB}{100,150,160}
   \definecolor{pinkcol}{RGB}{242,55,55}
   \definecolor{bluecol}{RGB}{50,205,50}
   \definecolor{bluecol}{RGB}{30,144,255}
   \definecolor{yellowcol}{RGB}{255,252,134}
   \definecolor{lbluecol}{RGB}{214,252,168}
\def\theequation{\arabic{section}.\arabic{equation}}
\newcommand{\be}{\begin{eqnarray}}
\newcommand{\red}{\color{red}}
\newcommand{\blue}{\color{blue}}
\newcommand{\ee}{\end{eqnarray}}
\newcommand{\nn}{\nonumber \\}
\newcommand{\lb}{\label}
\newcommand{\p}[1]{(\ref{#1})}
\newcommand{\ga}{\lower.7ex\hbox{$
\;\stackrel{\textstyle>}{\sim}\;$}}
\newcommand{\vecg}[1]{\mbox{\boldmath $#1$}}
\newcommand{\la}{\lower.7ex\hbox{$
\;\stackrel{\textstyle<}{\sim}\;$}}
\newcommand{\pd}{\partial}
\newcommand{\vecind}[1]{\mbox{\scriptsize \boldmath $#1$}}
\let\OLDthebibliography\thebibliography
\renewcommand\thebibliography[1]{
  \OLDthebibliography{#1}
  \setlength{\parskip}{4pt}
  \setlength{\itemsep}{0pt plus 0.3ex}
}
\begin{document}

\begin{titlepage}

\vspace*{0.2cm}

\renewcommand{\thefootnote}{\star}
\begin{center}

{\LARGE\bf  Cartan monopoles}\\

\vspace{0.5cm}

\vspace{1.5cm}
\renewcommand{\thefootnote}{$\star$}

\quad {\large\bf Andrei~Smilga} 
 \vspace{0.5cm}

{\it SUBATECH, Universit\'e de Nantes,}\\
{\it 4 rue Alfred Kastler, BP 20722, Nantes 44307, France;}\\
\vspace{0.1cm}

{\tt smilga@subatech.in2p3.fr}\\

\end{center}

\vspace{0.2cm} \vskip 0.6truecm \nopagebreak

\begin{abstract}
\noindent  
The effective Hamiltonians for chiral supersymmetric gauge theories at small spatial volume are generalizations of the Hamiltonians describing the motion of a scalar or a spinor particle in  a field of Dirac monopoles (we are dealing in fact with a certain lattice of monopoles supplemented with a periodic singular potential).  The gauge fields in such  Hamiltonians belong to the Cartan subalgebras of the corresponding gauge algebras. Such a  construction exists for all groups admitting complex representations, i.e. for $SU(N \geq 3), \ Spin(4n+2)$ with $n \geq 1$ and $E_6$. We give explicit expressions for these Hamiltonians  for $SU(3)$, $SU(4) \simeq Spin(6)$ and for $SU(5)$. The simplified version of such a Hamiltonian, deprived of fermion terms, of the extra scalar potential and when only one node of the lattice is taken into consideration, describe a $3r$-dimensional motion ($r$ being the rank of the group) in  the field  what we call a {\it Cartan monopole}. As is the case for the ordinary monopole, the Lagrangian of this system enjoys gauge symmetry, rotational symmetry, and the parameter, generalizing the notion of magnetic charge for Cartan monopoles, is quantized. 
 \end{abstract}

\vspace{1cm}
\bigskip

\newpage

\end{titlepage}

\tableofcontents

\setcounter{footnote}{0}

\setcounter{equation}0
\section{Dirac monopole and associated Hamiltonians}

We start with recalling  some basic facts concerning the dynamics of Dirac monopole \cite{Dirac}
The magnetic field has a Coulomb form:
\be
\lb{field-mon}
{\cal H}_i \ =\ \frac {qx_i}{r^3}\,,
\ee
where $q$ is the magnetic charge.

The corresponding gauge potentials may be chosen in the form
\be
\lb{A-mon}
{\cal A}_x \ =\ - \frac{qy}{r(r+z)}, \quad {\cal A}_y \ =\  \frac{qx}{r(r+z)}, \quad
{\cal A}_z \ =\ 0 \,.
\ee
The field \p{A-mon} is singular on the Dirac string, $\vecg{x} =  \vecg{n} r = (0,0,-r)$. Any other direction  $\vecg{n}$ of the string can be chosen. In particular, one can choose 
$\vecg{n} = (0,0,1)$,
 in which case the vector potentials
 \be
\lb{A-mon1}
\tilde {\cal A}_x \ =\ \frac{qy}{r(r-z)}, \quad \tilde {\cal A}_y \ =\  -\frac{qx}{r(r-z)}, \quad
\tilde {\cal A}_z \ =\ 0 
\ee
have the same curl \p{field-mon}.

The potentials \p{A-mon} and \p{A-mon1} are related by a gauge transformations
\be 
\lb{gauge-A}
 \tilde {\cal A}_j \ = \ {\cal A}_j - \pd_j \chi
 \ee 
with
\be
\chi(x,y,z) \ =\ 2q \arctan \frac yx \ =\ 2q\phi\,,
\ee
where $\phi$ is the azymuthal angle.

By considering the interaction of the monopole with a particle carrying an electric charge $e$, one derives the quantization condition,
 \be
\lb{quant-q}
qe \ =\ \frac n2
\ee
with integer $n$.\footnote{It is written in the unit system $\hbar = c = 1$. In the following, we will always set $e=1$ (with the positive sign, not the negative one as would be for a physical electron) so that $q = n/2$.}
There are two physical ways and a more mathematical way to derive \p{quant-q}.
\begin{itemize}
\item One can calculate the angular momentum of the electromagnetic field created by the monopole and the electric charge,
\be
\lb{J-field}
\vecg{J}^{\rm field}  \ =\ \frac 1{4\pi} \int d\vecg{x} \  [\vecg{x} \times [\vecg{E} \times \vecg{\cal H}]]\,,
\ee
 and require for it to be equal to $n/2$ with integer $n$.
\item One can consider the Schr\"odinger equation describing the motion of the charged particle in the monopole field,
\be
\lb{Schrod}
\hat H \Psi \ =\ \frac 12 (\hat P_j + {\cal A}_j)^2 \Psi \ =\ E\Psi 
 \ee
with $\hat P_j = -i \pd/ \pd x_j$. The gauge transformation \p{gauge-A} induces the corresponding transformation of the eigenfunctions of \p{Schrod}: $\tilde \Psi \ =\ e^{i\chi} \Psi$.
The requirement for $e^{i\chi}$ and hence for the wave function to be uniquely defined brings about the condition \p{quant-q}.\footnote{Well, one can {\it abandon} the idea of gauge invariance, pick up a particular direction of the Dirac string, in which case the spectral problem for the  Hamiltonian \p{Schrod} can in principle be formulated even for fractional magnetic charges. This means, however, that the Dirac string becomes {\it observable}, which is not what we want \cite{flux}.}

\item A more mathematical derivation was suggested in Ref. \cite{Wu-Yang}. 
Separating in \p{A-mon} the radial dependence, we arrive at an Abelian gauge field on $S^2$ that is singular at its south pole. Likewise, the field \p{A-mon1} is singular at its north pole. 

One now can construct a {\it fiber bundle} subdividing $S^2$ into two charts: the north hemisphere where the gauge field has the form 
\be
\lb{A-mon-north}
{\cal A}_x \ =\ - \frac{q \sin \theta \sin \phi}{r(1 + \cos \theta)}, \quad {\cal A}_y \ =\ 
\frac{q \sin \theta \cos \phi}{r(1 + \cos \theta)}, \quad
{\cal A}_z \ =\ 0 
 \ee
and the south hemisphere where 
\be
\lb{A-mon-south}
\tilde {\cal A}_x \ =\ \frac{q \sin \theta \sin\phi}{r(1 - \cos \theta)}, \quad \tilde {\cal A}_y \ =\ 
-\frac{q \sin \theta \cos \phi}{r(1 - \cos \theta)}, \quad
\tilde {\cal A}_z \ =\ 0 \,.
 \ee
 Two charts overlap on the equator, and if we require for $\tilde {\cal A}_j$ and ${\cal A}_j$ to be related there  by a {\it uniquely defined} gauge transformation, the quantization condition \p{quant-q} follows.

\end{itemize}

The Lagrangian, corresponding to the classical version of the quantum Hamiltonian in \p{Schrod}, reads
\be
\lb{L-mon}
L \ =\ \frac {\dot{x}_j  \dot{x}_j}2  - \dot{x}_j {\cal A}_j (\vecg{x})\,.
\ee

 \begin{prop}
The action $S$ of the system \p{L-mon} with $\vecg{\cal H} = \vecg{\nabla} \times \vecg{\cal A}$ given in Eq. \p{field-mon} is invariant under rotations. The conserved angular momentum is
\be
\lb{Mom-mon}
J_m \ =\ \varepsilon_{mnk} x_n \dot{x}_k + \frac {q x_m}r = 
\varepsilon_{mnk} x_n (P+ {\cal A})_k + \frac {q x_m}r\,. 
\ee
\end{prop}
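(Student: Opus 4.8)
The plan is to prove the statement in two stages. First I would show that a rotation changes the Lagrangian \p{L-mon} only by a total time derivative, so that it is a \emph{quasi}-symmetry of the action — the obstruction to a strict symmetry being precisely the Dirac string — and then extract the associated Noether charge. Since the presence of the string makes the Noether bookkeeping a little delicate, I would in parallel confirm the explicit formula \p{Mom-mon} by a short on-shell computation, which is self-contained and robust against sign conventions.

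For the quasi-invariance: under $x_j\mapsto\tilde x_j = R_{jk}x_k$ the kinetic term $\dot x_j\dot x_j/2$ is manifestly invariant, while the interaction term becomes $-\dot x_j\,{\cal A}^R_j(\vecg x)$ with the rotated potential ${\cal A}^R_j(\vecg x) := R_{ij}{\cal A}_i(R\vecg x)$. Its curl is the rotated magnetic field; since the Coulomb field \p{field-mon} is radial it is rotation invariant, so $\vecg\nabla\times\vecg{\cal A}^R = \vecg{\cal H} = \vecg\nabla\times\vecg{\cal A}$, and therefore ${\cal A}^R$ and ${\cal A}$ are gauge equivalent, ${\cal A}^R_j = {\cal A}_j - \pd_j\chi_R$, with $\chi_R$ single-valued away from the string — the very same mechanism as in \p{gauge-A}, where rotating the string is implemented by the gauge function $\chi = 2q\phi$. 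Hence
\be
\lb{quasi-inv}
L[\tilde{\vecg x},\dot{\tilde{\vecg x}}] \ =\ L[\vecg x,\dot{\vecg x}] + \dot x_k\,\pd_k\chi_R \ =\ L[\vecg x,\dot{\vecg x}] + \frac{d}{dt}\,\chi_R\big(\vecg x(t)\big)\,,
\ee
so the action changes only by the endpoint values of $\chi_R$, which proves the claimed rotational invariance of $S$. For the conserved charge I would then invoke Noether's theorem for quasi-symmetries: writing the infinitesimal rotation as $\delta x_j = \epsilon\,(\vecg\omega\times\vecg x)_j$ with unit axis $\vecg\omega = \vecg e_m$, Eq. \p{quasi-inv} says $\delta L = \epsilon\,\dot f_m$ with $f_m$ the infinitesimal gauge function, and the conserved quantity is $J_m = (\pd L/\pd\dot x_j)\,\pd(\delta x_j)/\pd\epsilon - f_m$. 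Here $P_j = \pd L/\pd\dot x_j = \dot x_j - {\cal A}_j$, so the first term is the naive angular momentum $(\vecg x\times\vecg P)_m = \varepsilon_{mnk}x_n\dot x_k - (\vecg x\times\vecg{\cal A})_m$; computing $f_m$ from the radial/string structure, the combination $-f_m$ supplies $(\vecg x\times\vecg{\cal A})_m + qx_m/r$, the first piece cancelling the string-dependent term above and leaving the gauge-invariant answer \p{Mom-mon}.

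The quicker, convention-free route is to verify $\dot J_m = 0$ on shell directly. The Euler--Lagrange equation of \p{L-mon} is the Lorentz law $\ddot x_j = -\varepsilon_{jkl}\dot x_k\,{\cal H}_l$, whence, after the $\varepsilon$-contraction, $\frac{d}{dt}\big(\varepsilon_{mnk}x_n\dot x_k\big) = \varepsilon_{mnk}x_n\ddot x_k = -\dot x_m(\vecg x\cdot\vecg{\cal H}) + (\vecg x\cdot\dot{\vecg x})\,{\cal H}_m$. Using $\vecg x\cdot\vecg{\cal H} = q/r$ and $(\vecg x\cdot\dot{\vecg x})\,{\cal H}_m = r\dot r\,qx_m/r^3 = qx_m\dot r/r^2$, this equals $-q\dot x_m/r + qx_m\dot r/r^2 = -\frac{d}{dt}(qx_m/r)$, so $\dot J_m = 0$, as claimed. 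The main obstacle is the conceptual point already flagged: because $\vecg{\cal A}$ is not globally defined the rotation is only a quasi-symmetry, and one must check that $\chi_R$ is single-valued along the physical trajectory (which stays off the string) so that the boundary term in \p{quasi-inv} — equivalently the extra, gauge-invariant term $q\vecg x/r$ in $\vecg J$ — is unambiguous; physically it is nothing but the angular momentum stored in the electromagnetic field, cf. \p{J-field}. The direct check sidesteps this issue entirely.
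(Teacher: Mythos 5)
Your proof is correct, and it reaches the result by a route that differs from the paper's in both halves. For the invariance of the action, the paper computes the infinitesimal variation $\delta L$ under $\delta x_j = \varepsilon_{jmn}\alpha_m x_n$ explicitly and, using the Coulomb form \p{field-mon} of $\vecg{\cal H}$, massages it into the total derivative $-\alpha_m \frac{d}{dt}\left(\varepsilon_{jmn}x_n{\cal A}_j + qx_m/r\right)$, from which the Noether charge \p{Mom-mon} follows immediately; you instead argue structurally that the rotated potential has the same curl as the original and is therefore a gauge transform of it, so the rotation is a quasi-symmetry. That argument is sound, but your parenthetical claim that $\chi_R$ is ``single-valued away from the string'' is too strong: the complement of the two strings (original and rotated) is not simply connected, and the paper's own transition function $\chi = 2q\phi$ is multivalued. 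What saves the argument is that only $\pd_j\chi_R$ enters $\delta L$, and a branch of $\chi_R$ can be chosen continuously along any trajectory avoiding the strings — a point you essentially flag yourself. You also leave the Noether function $f_m$ uncomputed, so the first half alone would not pin down the explicit form of $J_m$. This is repaired by your second half, the on-shell verification $\dot J_m = 0$ from the Lorentz equation $\ddot x_j = -\varepsilon_{jkl}\dot x_k {\cal H}_l$, which I have checked and which is correct; it is entirely absent from the paper and is arguably the cleanest, gauge- and convention-independent way to confirm \p{Mom-mon}. Its price is that it verifies the conserved quantity rather than deriving it, and it says nothing by itself about the invariance of the action — which is why both halves of your argument are genuinely needed, whereas the paper's single explicit computation delivers both conclusions at once.
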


\begin{proof}
It is quite explicit. The variation of the Lagrangian under $\delta x_j = \varepsilon_{jmn} \alpha_m x_n$ is
\be
\delta L \ =\ -\alpha_m \ \left(\varepsilon_{jmn} \dot{x}_n {\cal A}_j + \dot{x}_j \frac {\pd {\cal A}_j}{\pd x_p} \varepsilon_{pmn} x_n   \right)\,.
 \ee
After some massaging --- flipping  the time derivative in the first term and renaming the indices in the second term --- this can be represented as 
\be
\lb{tot-der}
\delta L \ =\ -\alpha_m \frac d{dt} \left( \varepsilon_{jmn} x_n {\cal A}_j \right) 
+\alpha_m (x_p {\cal H}_m - \delta_{mp} x_n {\cal H}_n ) \dot{x}_p = \nn
-\alpha_m  \frac d{dt}  \left(\varepsilon_{jmn} x_n {\cal A}_j  + \frac {qx_m}r    \right),
\ee
so that $\int L dt$ is invariant.

The conserved angular momentum can be derived by a standard Noether method.\footnote{The simplest way to proceed is first to find $J_3$ by evaluating the variation of the Lagrangian with the vector potentials \p{A-mon-north} under time-dependent rotations around the third axis. From this, the covariant expression \p{Mom-mon} can be easily restored.}
\end{proof}
The term $qx_m/r$ in $J_m$ is nothing but  the angular momentum \p{J-field}  of the  electromagnetic field.

 {\bf Remark.} The fact that the Lagrangian \p{L-mon} enjoys rotational invariance is not completely trivial. It would be if the potential
$$ \vecg{\cal A} \ =\ \frac {q \vecg{x} \times \vecg{n}}{r(r - \vecg{x} \cdot \vecg{n})} 
$$
 transformed as a vector under rotations of coordinates. But it {\it does} not. ${\cal A}_j$ transform as vector components only if one simultaneously rotates the direction $\vecg{n}$ of the Dirac string. But such a transformation acting not only on the dynamic variables $x_j$, but also on the parameters $n_j$, does not have Noether nature and the invariance under such transformation does not directly entail the existence of an integral of motion.

\vspace{1mm}

The Schr\"odinger equation \p{Schrod} with the monopole vector potentials  was solved by Tamm
\cite{Tamm}.  The wave functions are expressed via Jacobi polynomials. The allowed eigenvalues of $ \hat{\vecg{J}}^2$ are $j(j+1)$ with $j = |q|, |q|+1, \ldots $. The spectrum is
  \be
\lb{spec-mon}
E_{L=0,1,\ldots} \ =\ j(j+1) - q^2 
 \ee
with $2j + 1$ degenerate states on each level.

For a spin $\frac 12$ particle, the quantum Hamiltonian acquires a matrix form. We have to add to \p{Schrod}  the term $\frac\gamma 2{\cal H}_i \sigma_i$, where $\gamma$ is the gyromagnetic ratio of the particle. This  Schr\"odinger problem was also solved \cite{Malkus}. 

\section{Effective Hamiltonian for the chiral supersymmetric QED}
\setcounter{equation}0

We will show here how the Hamiltonian describing the interaction of charged particles with magnetic monopole appears naturally as an effective Hamiltonian for a {\it chiral} Abelian supersymmetric gauge theory placed in a small spatial box \cite{ja-Ab} (see also the review \cite{obzor} and Chapter 8 of the book \cite{Wit-book}). 

The simplest  such theory includes a vector multiplet $V$, eight  left-handed chiral matter superfields  carrying charge\footnote{It is a physical electric charge (as far as the fields in a hypothetical supersymmetric theory may be called physical), not the charge $e$  that entered Eq. \p{quant-q} and that we  set to 1. The monopoles that we are interested in in this paper enter the effective Hamiltonians living  in unphysical field spaces.}  $e$ and a right-handed superfield with charge $2e$.
In this case,  in contrast to the ordinary QED or SQED, the theory has no chiral left-right symmetry, but it is still anomaly-free and renormalizable due to the fact that the sums of the cubes of the charges for the left-handed and right-handed fermions are the same.

It is convenient to describe it only in terms of left chiral multiplets with eight multiplets $S^f$ carrying a unit charge $e$ and a multiplet $T$ carrying the charge $-2e$. Then
\be
\lb{sum-cubes}
\sum_k q_k^3 \ =\ 0\,,
 \ee
where the sum runs over all the multiplets. 
The Lagrangian of the model reads

\begin{empheq}[box = \fcolorbox{black}{white}]{align}
\lb{L-SQED-chiral}
{\cal L}\ = \frac 14 \int \!\! d^4\theta\, (\overline S^f e^{eV} S^f + \overline T e^{-2eV} T) + 
\left( \frac 18 \int  \!\! d^2\theta \,W^\alpha W_\alpha   + {\rm c.c.} \right).
 \end{empheq}
We put this theory in a finite spatial box of length $L$ and impose periodic boundary conditions. If $|e| \ll 1$, which we assume, one can   proceed in the spirit of \cite{Wit82} and subdivide all the dynamical variables in two classes: 

$\bullet$ {\it slow} variables, which are the  zero Fourier harmonics of the gauge field 
$c_j = A_j^{(\vecind{0})}$ and their superpartners $\eta_\alpha = \lambda_\alpha^{(\vecind{0})}$ and

$\bullet$ {\it fast} variables --- all the rest. 

The theory is invariant under gauge transformations, which include also {\it large} (topologically nontrivial) gauge transformations\footnote{In principle, one could relax the invariance requirement and impose a generic superselection rule such that all the wave functionals $\Psi[A, s^f, \psi^f, i, \xi]$ are multiplied by a phase factor exp$\{i \theta_j n_j\}$ with some particular $n_j$ after the transformation  \p{large}. But to keep supersymmetry, we have to stay in the  sector with zero  angle $\vecg{\theta} = \vecg{0}$.}
\be
\lb{large}
&&A_j \to A_j + \frac{2\pi n_j}{eL}, \nn
 &&(s^f,\psi^f)  \to (s^f,\psi^f) e^{-2\pi i \vecind{n} \cdot \vecind{x}/L}, \quad (t, \xi) \to (t,\xi) e^{4\pi i \vecind{n} \cdot \vecind{x}/L}\,.
 \ee
Then the zero modes of $A_j(\vecg{x})$ lie on a dual torus, $c_j \in [0, 2\pi/(|e|L)]$.

Bearing this in mind, the characteric excitation energy of slow degrees of freedom is $E_{\rm slow} \sim e^2/L$, which is much less than the characteric excitation energy of fast variables, $E_{\rm fast} \sim 1/L$. The latter can be integrated over to derive the effective Hamiltonian including only slow variables.

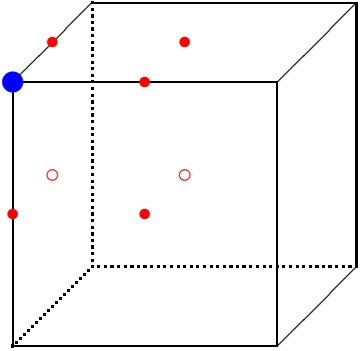
\begin{figure}

\begin{picture}(150,150)
 \put(100,0){\line(1,0){100}}
\put(100,0){\line(0,1){100}}
\put(100,100){\line(1,0){100}}
\put(200,0){\line(0,1){100}}
\put(100,100){\line(1,1){30}}
\put(200,100){\line(1,1){30}}
\put(130,130){\line(1,0){100}}
\put(200,0){\line(1,1){30}}
\put(230,30){\line(0,1){100}}

\put(100,100){\blue \circle*{8}}
\put(100,50){\red \circle*{4}}
\put(150,100){\red \circle*{4}}
\put(150,50){\red \circle*{4}}
\put(115,115){\red \circle*{4}}
\put(165,65){\red \circle{4}}
\put(165,115){\red \circle*{4}}
\put(115,65){\red \circle{4}}

\thicklines
\qbezier[20](100,0)(115,15)(130,30)
\qbezier[40](130,30)(180,30)(230,30)
\qbezier[40](130,30)(130,80)(130,130)

\end{picture}

 \caption{A unit cubic cell of the \index{magnetic!monopole} monopole crystal in field space. The edge of the cube is $2\pi/(|e|L)$.} 
\label{crystal}

 \end{figure} 

In the lowest Born-Oppenheimer approximation \cite{BO}, the effective supercharges and the Hamiltonian read \cite{ja-Ab}
\be
\lb{Q-chiral-SQED}
   \hat Q_\alpha^{\rm eff} \ =\ \left[(\sigma_k)_\alpha{}^\beta (\hat P_k + {\cal A}_k) +i  \delta_\alpha^\beta {\cal D}    \right] \eta_\beta, \nn
\hat {\bar Q}^{\alpha \, {\rm eff}} \ =\  \bar \eta^\beta \left[(\sigma_k)_\beta{}^\alpha (\hat P_k + {\cal A}_k) - i \delta_\beta^\alpha {\cal D}    \right], 
 \ee
\begin{empheq}[box = \fcolorbox{black}{white}]{align}
\lb{H-chiral}
\hat H^{\rm eff} \ =\ \frac 12 (\hat P_k + {\cal A}_k )^2 + \frac 12 {\cal D}^2 + \vecg{\cal H} \!\cdot \!\bar \eta \vecg{\sigma} \eta\,,
  \end{empheq}
where
\be
\lb{D-SQED}
{\cal D}(\vecg{c}) \ =\ 4\sum_{\vecind{n}} \frac 1 {\left|\vecg{c} - \frac {2\pi \vecind{n}}{eL} \right|}  - \frac 12 \sum_{\vecind{n}}   \frac 1{\left|\vecg{c} - \frac {\pi \vecind{n}}{eL} \right| } .      
 \ee 
and $\vecg{\cal A}(\vecg{c})$ is related to ${\cal D}(\vecg{c}) $  according to
\be
\lb{cond-KAH}
\vecg{\cal H} \ =\ \vecg{\nabla} \times \vecg{\cal A} \ =\ \vecg{\nabla} {\cal D} \,. 
\ee
The algebra
\be
\lb{algN2}
\{\hat Q_\alpha^{\rm eff}, \hat {\bar Q}^{\beta \, {\rm eff}}\} \ =\ 2\delta_\alpha^\beta 
\hat H^{\rm eff}
\ee
holds. 
The system belongs to a class of ${\cal N} = 2$ supersymmetric quantum mechanical systems constructed in 
\cite{Ritten}.

The function \p{D-SQED} is periodic in $\vecg{c}$ with the period $2\pi/(eL)$. We see a kind of crystal shown in Fig. \ref{crystal}.
In each node of this crystal, sits a monopole. The {\blue blue} blob marks the site with the monopole of charge $-7/2$. 
The {\red red} blobs mark the sites with the monopoles of charge $1/2$. Note that the net magnetic charge of the elementary cell is zero --- that is a consequence of the condition
\p{sum-cubes}.

Consider the Hamiltonian  \p{H-chiral} in the vicinity of one of the red nodes. It acts on the wave functions with fermion charges $F=0,1,2$. In the bosonic sectors, the Hamiltonian includes a monopole term like in \p{Schrod}  and an extra singular scalar potential. In the sector $F=1$, also the last term in \p{H-chiral} works. One may represent the wave function
$\Psi^{F=1}(\vecg{c}, \eta_\alpha) = \Psi_\alpha(\vecg{c}) \eta_\alpha$ as a spinor, so that  the  term $\vecg{\cal H} \!\cdot \! \bar \eta \vecg{\sigma} \eta$  acquires the form $\vecg{\cal H} \!\cdot \!\vecg{\sigma} $ and describes the interaction with the magnetic moment of the spin $\frac 12$ particle with the gyromagnetic ratio $\gamma = 2$ (as for the electron in the leading in $\alpha$ approximation).

The equation for the radial wave function $\chi(r)$ in the vicinity of the red node is 
\be
-\frac 1{2r} \frac {\pd^2}{(\pd r)^2}(r\chi) - \frac 1{8r^2} \chi \ =\ E\chi\,,
 \ee
where $r$ is the distance from the node.
This equation has a zero energy solution, $\chi(r) \sim 1/\sqrt{r}$. The normalization integral $\int r^2 \chi(r) dr$ for this solution diverges, but one can deform the Hamiltonian \p{H-chiral} adding a positive constant $C$ to  ${\cal D}(\vecg{c})$, so that the condition \p{cond-KAH} and the algebra \p{algN2} still hold. Then the zero energy solution of the corresponding radial equation would include an exponenial factor $e^{-Cr}$ and the wave function becomes normalizable. Moreover, for large enough $C$, the zero mode that we found will be concentrated in the region around the chosen  {\red red} node and will not sense the precence of other nodes. As the elementary cell involves 7 red monopoles with a positive magnetic charge and as, for positive $C$, there is no zero-energy solution around a {\blue blue} negatively charged monopole, we can derive the existence of seven zero energy solutions --- not only for the deformed, but also for the undeformed effective Hamiltonian and hence \cite{Wit82} also for the complicated field Hamiltonian of the original theory \p{L-SQED-chiral}. All these states are fermionic and hence the Witten index $I_W$ of chiral SQED in \p{L-SQED-chiral} is equal to \cite{ja-Ab,Wit-book}
 \be
\lb{IW-Ab}
I_W \ =\ -7\,.
  \ee
One can also  invent more complicated anomaly-free chiral SQED theories. Take for example the theory involving 27 left multiplets  with charge $e$ and a left multiplet of charge $-3e$. In that case, the Witten index is equal to $-27$.

\section{$SU(3)$ theory}
\setcounter{equation}0

These constructions can be generalized to non-Abelian theories. The group $SU(2)$ has only self-conjugate representations, and the corresponding theories are not chiral. The simplest chiral non-Abelian supersymmetric gauge theory is based on  $SU(3)$. It includes, besides the vector multiplet,  seven matter chiral multiplets in the fundamental group representation $\bf 3$ and a multiplet in the representation $\bar {\bf 6}$. \index{representation!sextet of $SU(3)$}
In terms of superfields, the Lagrangian of the model is
 \begin{empheq}[box = \fcolorbox{black}{white}]{align}
\lb{chiral-SU3}
{\cal L} \ =\ \left( \frac 14 {\rm Tr} \int d^2\theta\, \hat W^\alpha \hat W_\alpha  + {\rm c.c.} \right)
 + \frac  14 \int d^4\theta \left(\sum_{f=1}^7 \overline S^{fj} e^{g\hat V} S^f_j + 
\Phi^{jk} e^{-g\hat V  } \overline \Phi_{jk} \right) 
 \end{empheq}
with $j,k = 1,2,3$ and $\Phi^{jk} = \Phi^{kj}$. 

The generators of the group in the sextet representation read
\be
(T^a)_{ij}^{kl} \ =\ \frac 12 \left[ (t^a)_i^k \delta_j^l + (t^a)_i^l \delta_j^k + (t^a)_j^k \delta_i^l + (t^a)_j^l \delta_i^k \right]\,,
\ee
where $t^a$ are the generators in the fundamental representation and the factor $1/2$ stems from the requirement that $T^a$ satisfy the same commutation relations, $[T^a, T^b] = i f^{abc} T^c$, as $t^a$. Then 
\be
\lb{dabc-SU3}
 {\rm Tr} \{ T^{(a} T^b T^{c)} \} \ = \ 7 \,{\rm Tr} \{t^{(a} t^b t^{c)} \} \ =\ \frac 74 d^{abc}. 
 \ee
The contribution of each matter multiplet to the chiral anomaly is proportional to $ {\rm Tr} \{ T^{(a} T^b T^{c)} \}$ for the generators in the corresponding representation. For the theory \p{chiral-SU3}, the anomaly {\it cancels} and the theory is gauge-invariant and renormalizable.

 The effective Hamiltonian at small finite box can be found along the same lines as for the Abelian theory. We have 6 bosonic slow variables --- zero Fourier modes of the gauge fields belonging to the Cartan subalgebra of $SU(3)$ which can be represented as
 \be
\lb{Cartan}
\vecg{A}^a t^a \ =\ \frac 12 \,{\rm diag} (\vecg{a}, \vecg{b} - \vecg{a}, - \vecg{b})
\ee
with  $a_j, b_j \in \left[0, \frac{4\pi}{gL} \right)$ (the shifts  of $a_j, b_j$ by $4\pi/(gL)$ boil down to topologically trivial gauge transformations).
 The Hamiltonian also includes their fermion superpartners 
\be
\eta_\alpha^{(a)} \ &=&\ \eta_\alpha^3 + \frac {\eta_\alpha^8}{\sqrt{3}}\,, \nn
\eta_\alpha^{(b)} \ &=&\  \frac {2\eta_\alpha^8}{\sqrt{3}} 
\ee
and their conjugates. They have the following anticommutators:
\be
\lb{anticom-eta}
 \{\hat{\bar \eta}^{(a)\beta}, \eta^{(a)}_\alpha\} = \{\hat{\bar \eta}^{(b)\beta}, \eta^{(b)}_\alpha\} = \frac 43 \delta_\alpha^\beta,
 \quad \{\hat{\bar \eta}^{(a)\beta}, \eta^{(b)}_\alpha\} = \{\hat{\bar \eta}^{(b)\beta}, \eta^{(a)}_\alpha\} = \frac 23 \delta_\alpha^\beta\,, \nn
 \ee

In contrast to the pure SYM theory where the effective Hamiltonian to the lowest BO order describes free motion  in $\vecg{a}, \vecg{b}$ space
\cite{Wit82}, in the chiral case, the effective supercharges and Hamiltonian have a rather nontrivial form. One can derive  \cite{Blok}\footnote{This derivation for $SU(3)$ is spelled out in detail in the book \cite{Wit-book}. In the Appendix, we will describe a similar derivation for $SU(4)$.} 
 \be
\lb{Q-chiral}
   \hat Q_\alpha^{\rm eff} \ &=& \  {\eta^{(a)}_\beta} \left[(\sigma_k)_\alpha{}^\beta (\hat P_k^{(a)} + {\cal A}^{(a)}_k) 
+i  \delta_\alpha^\beta {\cal D}_{(a)}    \right] \nn
&+& \   {\eta^{(b)}_\beta} \left[(\sigma_k)_\alpha{}^\beta (\hat P_k^{(b)} + {\cal A}^{(b)}_k) +i  \delta_\alpha^\beta {\cal D}_{(b)}    \right],  \nn
\hat {\bar Q}^{\alpha \, {\rm eff}} \ &=&\    {\hat{\bar \eta}^{(a)\beta}}\left[(\sigma_k)_\beta{}^\alpha (\hat P_k^{(a)} 
 + {\cal A}^{(a)}_k) - i \delta_\beta^\alpha {\cal D}_{(a)}    \right] \nn
&+&\ {\hat{\bar \eta}^{(b)\beta}}  \left[(\sigma_k)_\beta{}^\alpha (\hat P_k^{(b)} + {\cal A}^{(b)}_k) - i \delta_\beta^\alpha {\cal D}_{(b)}    \right],
 \ee
and 
 \begin{empheq}[box = \fcolorbox{black}{white}]{align}
\lb{H-chiral-nab}
&\hat H^{\rm eff} =  \frac 23 \left[(\hat P_k^{(a)} + {\cal A}^{(a)}_k )^2 + (\hat P_k^{(b)} + {\cal A}^{(b)}_k )^2 + (\hat P_k^{(a)} + 
{\cal A}^{(a)}_k ) (\hat P_k^{(b)} + {\cal A}^{(b)}_k) \right. \nn 
&+ \left. {\cal D}_{(a)}^2 + {\cal D}_{(b)}^2 
   + {\cal D}_{(a)} {\cal D}_{(b)}\right]\nn 
 &+ \, \vecg{\cal H}^{(a)}\, \hat{\bar \eta}^{(a)} \vecg{\sigma} \eta^{(a)} +  \vecg{\cal H}^{(b)}\, \hat{\bar \eta}^{(b)} \vecg{\sigma} \eta^{(b)} +  \vecg{\cal H}^{(ab)}( \hat{\bar \eta}^{(a)} \vecg{\sigma} \eta^{(b)} 
+  \hat{\bar \eta}^b \vecg{\sigma} \eta^{(a)} )\,.
  \end{empheq}
Here $\hat P_k^{(a)} = -i \pd/\pd a^k$ and $\hat P_k^{(b)} = -i \pd/\pd b^k$. The functions ${\cal D}_{(a)}$ and ${\cal D}_{(a)}$ represent the following infinite sums:
\be
\lb{Dab}
\!\!\!\!\!{\cal D}_{(a)} &=& \frac 92 \sum_{\vecind{n}}\left[\frac 1{|\vecg{a} - 4\pi \vecg{n}|} - \frac 1{|\vecg{a} - \vecg{b} - 4\pi \vecg{n}|}   \right] \ - \  \frac 12     \sum_{\vecind{n}} \left[\frac 1{|\vecg{a} - 2\pi \vecg{n}|} - \frac 1{|\vecg{a} - \vecg{b} - 2\pi \vecg{n}|} \right] , \nn
\!\!\!\!\!{\cal D}_{(b)} &=& \frac 92 \sum_{\vecind{n}}\left[ \frac 1{|\vecg{a} - \vecg{b} - 4\pi \vecg{n}|}  - \frac 1{|\vecg{b} - 4\pi \vecg{n}|}\right] \ - \ \frac 12 \sum_{\vecind{n}}\left[ \frac 1{|\vecg{a} - \vecg{b} - 2\pi \vecg{n}|}  - \frac 1{|\vecg{b} - 2\pi\vecg{n}|}\right],
\ee
(we have set $g = L = 1$).
We see a nontrivial crystal structure in 6-dimensional configuration space.

The vector potentials  $\vecg{\cal A}^{(a,b)}$ are related to ${\cal D}_{(a,b)}$  so that
\be
\lb{cond-nab}
{\cal H}^{(a)}_j &\equiv& \varepsilon_{jkl}\, \pd_k^a {\cal A}^{(a)}_l \ =\ \pd_j^a {\cal D}_{(a)}\,, \nn
{\cal H}^{(b)}_j &\equiv& \varepsilon_{jkl} \,\pd_k^b {\cal A}^{(b)}_l \ =\ \pd_j^b {\cal D}_{(b)}\,, \nn
\varepsilon_{jkl} {\cal H}^{(ab)}_j  \ &\equiv& \ \pd_k^a {\cal A}^{(b)}_l  - \pd_l^b {\cal A}^{(a)}_k  \ =\   \varepsilon_{jkl}  \pd_j^a {\cal D}_{(b)}  \ =\   \varepsilon_{jkl}  \pd_j^b {\cal D}_{(a)}  \,.
 \ee
In Ref. \cite{Blok}, we performed an attempt to calculate the Witten index for this Hamiltonian and hence for the original theory. Unfortunately, we could not do that: the system is rather complicated, and it is difficult to solve the Schr\"odinger equation or the equations
$ \hat Q_\alpha^{\rm eff}  \Psi = \hat {\bar Q}^{\alpha \, {\rm eff} }\Psi = 0$.
analytically. An attempt to calculate the index via the functional integral in the approximation where only the zero Fourier modes in Euclidean time are taken into account  \cite{Cecotti} also failed: such a calculation gave a fractional result for the index. The reason is that the Hamiltonian \p{H-chiral-nab} involves strong singularities at $\vecg{a}_{\vecind{n}} = \vecg{0}$. These singularities invalidate the Cecotti-Girardello approximation: an estimate shows that the contributions of zero and nonzero Fourier modes in the functional integral are of the same order.\footnote{However, in spite of the fact that the CG approximation is not justified also in the Abelian case, such a calculation gave a correct answer \p{IW-Ab} for the Abelian theory! \cite{ja-Ab} The reason by which it works for the Hamiltonian \p{H-chiral}, but does not work for the Hamiltonian \p{H-chiral-nab} is  presently not clear.} 

In this paper, we will concentrate on studying the properties of the bosonic part of the Hamiltonian \p{H-chiral-nab} for $SU(3)$ and other groups. We will keep there only the kinetic part,
\be
\lb{H-Cartan}
\hat H =  \frac 23 \left[(\hat P_k^{(a)} + {\cal A}^{(a)}_k )^2 + (\hat P_k^{(b)} + {\cal A}^{(b)}_k )^2 + (\hat P_k^{(a)} + 
{\cal A}^{(a)}_k ) (\hat P_k^{(b)} + {\cal A}^{(b)}_k)  \right],
\ee
 and consider it only in the vicinity  of the ``corner" $\vecg{a}_0 = \vecg{b}_0 = \vecg{0}$ so that\footnote{We have introduced here a factor $q$ --- the ``charge" of the Cartan monopole. We will see soon that this charge can acquire any integer or half-integer value, as it was the case for the ordinary monopole. The actual value of the charge for the effective monopole of the chiral $SU(3)$ theory sitting in the corner is $q=-4$. } 
\be
\lb{corner}
{\cal D}_{(a)} = \ q \left[ \frac 1{\rho_{ab}}   - \frac 1{\rho_a} \right] , \quad 
{\cal D}_{(b)} =  q \left[  \frac 1{\rho_b} - \frac 1{\rho_{ab}}\right] 
\ee
with $\rho_a = |\vecg{a}|, \rho_b = |\vecg{b}|, \rho_{ab} = |\vecg{a} - \vecg{b}|$,
while the generalized vector potentials and magnetic fields are still related to ${\cal D}_{(a),(b)}$  by \p{cond-nab}. The magnetic fields have the form
\be
\lb{cornerH}
\vecg{\cal H}^{(a)} \ &=&\ q \left( \frac {\vecg{a}}{\rho_a^3} - \frac {\vecg{a} - \vecg{b}}{\rho_{ab}^3} \right)\,, \nn
\vecg{\cal H}^{(b)} \ &=&\ q \left( - \frac {\vecg{b}}{\rho_b^3}  +\frac {\vecg{b} - \vecg{a}}{\rho_{ab}^3} \right)\,, \nn
\vecg{\cal H}^{(ab)} \ &=&\  \frac {q(\vecg{a} - \vecg{b})}{\rho_{ab}^3}\,.
\ee
The vector potentials may be chosen in the gauge
\be
\lb{cornerA}
{\cal A}_x^{(a)} \ &=&\ q \left( -\frac {a_y}{\rho_a(\rho_a + a_z)} + \frac {(a-b)_y}{\rho_{ab}[\rho_{ab} + (a-b)_z]} \right) \,, \nn
{\cal A}_y^{(a)} \ &=&\ q\left(  \frac {a_x}{\rho_a(\rho_a + a_z)} - \frac {(a-b)_x}{\rho_{ab}[\rho_{ab} + (a-b)_z]} \right)\,, \nn
{\cal A}_x^{(b)} \ &=&\ q \left( -\frac {b_y}{\rho_b(\rho_b - b_z)} + \frac {(b-a)_y}{\rho_{ab}[\rho_{ab} - (b-a)_z]} \right) \,, \nn
{\cal A}_y^{(b)} \ &=&\ q \left( \frac {b_x}{\rho_b(\rho_b - b_z)} - \frac {(b-a)_x}{\rho_{ab}[\rho_{ab} - (b-a)_z]}  \right)\,, \nn
{\cal A}_z^{(a)} \ &=&\ {\cal A}_z^{(b)} \ =\ 0\,.
\ee
The symmetry relations
\be
\lb{sym-AH}
&&\vecg{\cal A}^{(a)} (-\vecg{b},-\vecg{a}) \ =\ - \vecg{\cal A}^{(b)}(\vecg{a},\vecg{b})\,, \nn
  &&\vecg{\cal H}^{(a)}(-\vecg{b},-\vecg{a}) \ =\  \vecg{\cal H}^{(b)}(\vecg{a},\vecg{b}), \qquad \vecg{\cal H}^{(ab)}(-\vecg{b},-\vecg{a}) \ =\  \vecg{\cal H}^{(ab)}(\vecg{a},\vecg{b})
 \ee
hold.
Note that the directions of the Dirac strings for ${\cal A}^{(a)}$ and ${\cal A}^{(b)}$ are correlated or, better to say, {\it anticorrelated}. Otherwise the symmetry \p{sym-AH} would not be there and the relations \p{cond-nab} would not be satisfied.

{\it This} is  what we will call a {\it Cartan monopole}.

\subsection{Dynamics of the system \p{H-Cartan}}
The Lagrangian that corresponds to \p{H-Cartan} reads
 \be
\lb{L-Cartan}
L \ =\ \frac 12 (\dot{\vecg{a}}^2 + \dot{\vecg{b}}^2 - \dot{\vecg{a}} \cdot  \dot{\vecg{b}}) - \dot{\vecg{a}} \vecg{\cal A}^a -
\dot{\vecg{b}} \vecg{\cal A}^b \,.
\ee

\begin{prop}
The action $\int L dt$ for the Lagrangian \p{L-Cartan} is  invariant under rotations
\be
\lb{rot-nab}
\delta a_j = \varepsilon_{jmn} \alpha_m a_n, \qquad \delta b_j = \varepsilon_{jmn} \alpha_m b_n\,.
 \ee
The corresponding integral of motion reads
\be
\lb{Mom-SU3}
J_m \ =\ 
\varepsilon_{mnk} \left[ a_n (P^{(a)} + {\cal A}^{(a)})_k  + b_n (P^{(b)} + {\cal A}^{(b)})_k \right] +  a_m {\cal D}_{(a)} +  b_m {\cal D}_{(b)}\,. 
\ee
\end{prop}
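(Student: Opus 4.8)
The plan is to follow the Noether strategy of the first Proposition. Write $L = L_{\rm kin} + L_{\rm int}$ with $L_{\rm kin} = \tfrac12(\dot{\vecg{a}}^2 + \dot{\vecg{b}}^2 - \dot{\vecg{a}}\cdot\dot{\vecg{b}})$ and $L_{\rm int} = -\dot{\vecg{a}}\cdot\vecg{\cal A}^{(a)} - \dot{\vecg{b}}\cdot\vecg{\cal A}^{(b)}$. The kinetic part is manifestly invariant under the diagonal rotations \eqref{rot-nab}, being a polynomial in the rotation scalars $\dot{\vecg{a}}\cdot\dot{\vecg{a}}$, $\dot{\vecg{b}}\cdot\dot{\vecg{b}}$, $\dot{\vecg{a}}\cdot\dot{\vecg{b}}$, so the whole question reduces to the transformation of $L_{\rm int}$. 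I would compute $\delta L_{\rm int}$ under \eqref{rot-nab}, keeping in mind that $\vecg{\cal A}^{(a)}$ and $\vecg{\cal A}^{(b)}$ depend on $\vecg{a}$ and $\vecg{b}$ simultaneously (through $\vecg{a}-\vecg{b}$), so that $\delta{\cal A}^{(a)}_j = (\pd^a_p{\cal A}^{(a)}_j)\,\delta a_p + (\pd^b_p{\cal A}^{(a)}_j)\,\delta b_p$ and similarly for $\vecg{\cal A}^{(b)}$.

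After flipping the time derivative onto the other factor in the terms containing $\delta\dot{a}_j$, $\delta\dot{b}_j$ and relabelling indices --- the same ``massaging'' that led to \eqref{tot-der} --- I expect $\delta L_{\rm int}$ to split into the obvious total time derivative $-\alpha_m\frac{d}{dt}\bigl(\varepsilon_{jmn}a_n{\cal A}^{(a)}_j + \varepsilon_{jmn}b_n{\cal A}^{(b)}_j\bigr)$ plus a remainder bilinear in $(\dot{\vecg{a}},\dot{\vecg{b}})$ whose coefficients are built from the three magnetic fields $\vecg{\cal H}^{(a)}$, $\vecg{\cal H}^{(b)}$, $\vecg{\cal H}^{(ab)}$. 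The pure-$a$ and pure-$b$ pieces of that remainder reproduce verbatim the abelian computation and collapse to $-\alpha_m\frac{d}{dt}(a_m{\cal D}_{(a)})$ and $-\alpha_m\frac{d}{dt}(b_m{\cal D}_{(b)})$ using $\vecg{a}\cdot\dot{\vecg{a}} = \rho_a\dot\rho_a$ together with the Coulomb form \eqref{cornerH}; the genuinely new contribution is the cross term, which mixes $\pd^b{\cal A}^{(a)}$ with $\pd^a{\cal A}^{(b)}$ and has to be handled by the third relation in \eqref{cond-nab}, the one involving ${\cal H}^{(ab)}$ (the ``anticorrelation'' of the Dirac strings). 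Assembling everything gives $\delta L = -\alpha_m\frac{d}{dt}F_m$ with $F_m = \varepsilon_{jmn}a_n{\cal A}^{(a)}_j + \varepsilon_{jmn}b_n{\cal A}^{(b)}_j + a_m{\cal D}_{(a)} + b_m{\cal D}_{(b)}$, which establishes invariance of $\int L\,dt$.

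For the conserved charge I would invoke Noether: since $\partial L/\partial\dot{a}_j = (P^{(a)}+{\cal A}^{(a)})_j - {\cal A}^{(a)}_j$ and likewise for $b$, the Noether integral is $J_m = (\partial L/\partial\dot{a}_j)\varepsilon_{jmn}a_n + (\partial L/\partial\dot{b}_j)\varepsilon_{jmn}b_n + F_m$, and the $\vecg{\cal A}$-pieces cancel against $F_m$ exactly as $\varepsilon_{jmn}x_n{\cal A}_j$ did in \eqref{Mom-mon}, leaving precisely \eqref{Mom-SU3}; the surviving $a_m{\cal D}_{(a)} + b_m{\cal D}_{(b)}$ is the analogue of the field angular momentum $qx_m/r$. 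As in the footnote to the first Proposition, the cleanest way to run this in practice is to obtain $J_3$ first from a time-dependent rotation about the $z$-axis in the explicit gauge \eqref{cornerA}, where ${\cal A}_z^{(a)} = {\cal A}_z^{(b)} = 0$ and most terms drop out, and then restore the covariant form. The main obstacle is the cross term in the second step: tracking which gradient ($\pd^a$ or $\pd^b$) acts on which potential and verifying that the mixed contributions recombine into a total derivative rather than leaving a leftover --- this is exactly where \eqref{cond-nab}, and hence the correlated choice of Dirac-string directions in \eqref{cornerA}, is indispensable, and a useful consistency check at each stage is that the symmetry relations \eqref{sym-AH} be respected. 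An alternative that bypasses the Noether bookkeeping is to write down the six-dimensional Lorentz-type equations of motion from \eqref{L-Cartan} and verify $\dot J_m = 0$ directly, again invoking \eqref{cond-nab}; I would keep this as a cross-check.
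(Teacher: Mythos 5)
Your strategy is the paper's own: vary only the interaction term $-\dot{\vecg{a}}\cdot\vecg{\cal A}^{(a)}-\dot{\vecg{b}}\cdot\vecg{\cal A}^{(b)}$, flip the time derivatives to extract $-\alpha_m\varepsilon_{jmn}\frac{d}{dt}\bigl(a_n{\cal A}^{(a)}_j+b_n{\cal A}^{(b)}_j\bigr)$, convert the remaining velocity bilinears into the curls via \p{cond-nab}, show that what is left is again a total time derivative, and read off \p{Mom-SU3} by Noether; your final $F_m$ is exactly the paper's.

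One intermediate claim is stated incorrectly, although it does not change the method or the outcome: the pure-$a$ bilinear $\alpha_m\varepsilon_{pmn}\varepsilon_{jps}\,a_n\dot a_j{\cal H}^{(a)}_s$ does \emph{not} by itself collapse to $-\alpha_m\frac{d}{dt}\bigl(a_m{\cal D}_{(a)}\bigr)$. Since $a_m{\cal D}_{(a)}=q\bigl(a_m/\rho_{ab}-a_m/\rho_a\bigr)$ and $\rho_{ab}$ depends on $\vecg{b}$, its time derivative contains $\dot{\vecg{b}}$ and cannot arise from a term proportional to $\dot a_j$ alone. The grouping that actually works (and that the paper uses) is to split ${\cal H}^{(a)}_s=qa_s/\rho_a^3-{\cal H}^{(ab)}_s$ and ${\cal H}^{(b)}_s=-qb_s/\rho_b^3-{\cal H}^{(ab)}_s$: only the Coulomb pieces reproduce the Abelian computation of \p{tot-der} verbatim, yielding total derivatives $\propto d(a_m/\rho_a)/dt$ and $d(b_m/\rho_b)/dt$, while the ${\cal H}^{(ab)}$ portions of the pure-$a$ and pure-$b$ bilinears must be pooled with the genuine cross bilinears $(a_n\dot b_j+b_n\dot a_j){\cal H}^{(ab)}_s$; the combination $\bigl[-a_n\dot a_j-b_n\dot b_j+a_n\dot b_j+b_n\dot a_j\bigr]{\cal H}^{(ab)}_s=-(a-b)_n(\dot a-\dot b)_j{\cal H}^{(ab)}_s$ is then a total derivative $\propto d\bigl[(a-b)_m/\rho_{ab}\bigr]/dt$, and the three pieces together reassemble into $\frac{d}{dt}\bigl(a_m{\cal D}_{(a)}+b_m{\cal D}_{(b)}\bigr)$. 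You do flag the cross-term recombination as the delicate point, so this is a bookkeeping correction rather than a missing idea; with this regrouping your argument coincides with the paper's.
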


\begin{proof}
It goes along the similar lines as in the Abelian case.\footnote{Well, the Hamiltonian \p{H-Cartan} is defined on the Cartan subalgebra of $SU(3)$, which is also Abelian, but it came from non-Abelian studies.}

Consider the variation of the Lagrangian \p{L-Cartan} under \p{rot-nab}. We have
\be
\delta L \ =\ -\delta \left(\dot{a}_j {\cal A}_j^{(a)} + \dot{b}_j {\cal A}_j^{(b)} \right)  =\ -\alpha_m  \varepsilon_{jmn} \left(\dot a_n {\cal A}^{(a)}_j +
 \dot b_n {\cal A}^{(b)}_j  \right) \nn 
- \,  \alpha_m  \varepsilon_{pmn} \left[  \dot{a}_j \left(  \frac {\pd {\cal A}_j^{(a)}}{\pd a_p}  a_n  + \frac {\pd {\cal A}_j^{(a)}}{\pd b_p}  b_n   \right)    
+  \dot{b}_j \left(  \frac {\pd {\cal A}_j^{(b)}}{\pd a_p}  a_n  + \frac {\pd {\cal A}_j^{(b)}}{\pd b_p}  b_n   \right)   
 \right].
 \ee
Flipping the derivatives in the first term, we derive
\be
\lb{delta-L}
\delta L\ =\ - \alpha_m  \varepsilon_{jmn} \frac d{dt} \left(a_n {\cal A}^{(a)}_j +
  b_n {\cal A}^{(b)}_j  \right) \nn
+ \,\alpha_m  \varepsilon_{pmn} \left[a_n \dot a_j \left(\frac {\pd {\cal A}_p^{(a)}}{\pd a_j} -  \frac {\pd {\cal A}_j^{(a)}}{\pd a_p}  \right) 
+ b_n \dot b_j \left(\frac {\pd {\cal A}_p^{(b)}}{\pd b_j} -  \frac {\pd {\cal A}_j^{(b)}}{\pd b_p}  \right) \right]\nn
+ \,\alpha_m  \varepsilon_{pmn} \left[a_n \dot b_j \left(\frac {\pd {\cal A}_p^{(a)}}{\pd b_j} -  \frac {\pd {\cal A}_j^{(b)}}{\pd a_p}  \right) 
+ b_n \dot a_j \left(\frac {\pd {\cal A}_p^{(b)}}{\pd a_j} -  \frac {\pd {\cal A}_j^{(a)}}{\pd b_p}  \right) \right].
 \ee 
Use now the relations \p{cond-nab}. 
The second and the third line  in Eq.\p{delta-L} acquire the form  
\be
\lb{23-line}
\alpha_m \varepsilon_{pmn} \varepsilon_{jps} \left[a_n \dot a_j {\cal H}_s^{(a)} + b_n \dot b_j  {\cal H}_s^{(a)} +
( a_n \dot b_j + b_n \dot a_j) {\cal H}_s^{(ab)} \right]
\ee
Bearing in mind \p{cornerH}, the magnetic fields ${\cal H}_s^{(a)}$ and ${\cal H}_s^{(b)}$ can be represented as
$$  {\cal H}_s^{(a)}  \ =\ \frac {q a_s}{\rho_a^3} - {\cal H}_s^{ab}, \qquad  {\cal H}_s^{(b)}  \ =\  -\frac {q b_s}{\rho_a^3} - {\cal H}_s^{ab}\,. $$
Take the term $qa_s/\rho_a^3$ in ${\cal H}_s^{(a)}$ and substitute it in \p{23-line}. We may observe that this contribution in $\delta L$ has the same structure as in the Abelian case  [see Eq. \p{tot-der}] and boils down to a total derivative $\propto d(a_m/\rho_a)/dt$. By the same token, the contribution due to the term $-qb_s/\rho_b^3$ in ${\cal H}_s^{(b)}$ boils down to a total derivative $\propto d(b_m/\rho_b)/dt$. 
We are left with the contribution
\be
\alpha_m \varepsilon_{pmn} \varepsilon_{jps}\left[- a_n \dot a_j - b_n \dot b_j  + a_n \dot b_j + b_n \dot a_j \right] {\cal H}_s^{ab} \ = \nn
q\alpha_m(\delta_{mj} \delta_{ns} - \delta_{ms} \delta_{nj})(\dot a_j - \dot b_j)(a-b)_n(a-b)_s/\rho_{ab}^3\,,
\ee
which also gives a total derivative $\propto d[(a-b)_m/\rho_{ab}]/dt$.

The integral of motion \p{Mom-SU3} can be derived by the standard Noether procedure.
\end{proof}

In addition, the action is invariant under the transformations (the remnant of gauge transformations of the original field theory)
\be
\lb{gauge-nab}
{\cal A}_j^{(a)} \ \to \ {\cal A}_j^{(a)} - \pd_j^a F(\vecg{a}, \vecg{b}), \qquad {\cal A}_j^{(b)} \ \to \ {\cal A}_j^{(b)} - \pd_j^b F(\vecg{a}, \vecg{b})\,.
\ee
 The curls  $\vecg{\cal H}^{(a)} $, 
$\vecg{\cal H}^b $ and $\vecg{\cal H}^{(ab)} $  keep their form under \p{gauge-nab}. 

As is the case for the ordinary Dirac monopole, the Cartan monopole charge $q$ is quantized to be integer or  half-integer. To see this, consider the gauge transformation \p{gauge-nab} with
\be
F(\vecg{a}, \vecg{b}) \ =\ 2q \left[ \arctan \frac {a_y}{a_x}   + \arctan \frac {b_y}{b_x}
- \arctan \frac {(a-b)_y}{(a-b)_x} \right]\,.
\ee
After this transformation, the vector potentials \p{cornerA} acquire the form
 \be
\lb{cornerA1}
\tilde {\cal A}_x^{(a)} \ &=&\ q \left( \frac {a_y}{\rho_a(\rho_a - a_z)} - \frac {(a-b)_y}{\rho_{ab}[\rho_{ab} - (a-b)_z]} \right) \,, \nn
\tilde{\cal A}_y^{(a)} \ &=&\ q\left(  -\frac {a_x}{\rho_a(\rho_a - a_z)} + \frac {(a-b)_x}{\rho_{ab}[\rho_{ab} - (a-b)_z]} \right)\,, \nn
\tilde{\cal A}_x^{(b)} \ &=&\ q \left( \frac {b_y}{\rho_b(\rho_b + b_z)} - \frac {(b-a)_y}{\rho_{ab}[\rho_{ab} + (b-a)_z]} \right) \,, \nn
\tilde{\cal A}_y^{(b)} \ &=&\ q \left( -\frac {b_x}{\rho_b(\rho_b + b_z)} + \frac {(b-a)_x}{\rho_{ab}[\rho_{ab} + (b-a)_z]}  \right)\,, \nn
\tilde{\cal A}_z^{(a)} \ &=&\ \tilde{\cal A}_z^{(b)} \ =\ 0\,,
\ee
which corresponds to the opposite direction of the Dirac strings, compared to \p{cornerA}.

The eigenfunctions of \p{H-Cartan} with the vector potentials in \p{cornerA1} and in \p{cornerA} are related as
\be
\tilde \Psi(\vecg{a}, \vecg{b}) \ =\ \exp\{i F(\vecg{a}, \vecg{b})\} \Psi(\vecg{a}, \vecg{b}) \,.
\ee
For the wave function to be uniquely defined, we must require for $ \exp\{i F(\vecg{a}, \vecg{b})\}$ to be uniquely defined. And this is so iff $q$ is integer or  half-integer.


Note that one also can consider a generalized {\it asymmetric} Cartan monopole with the gauge potentials
\be
\lb{asymA}
{\cal A}_x^{(a)} \ &=&\   -q_a\frac {a_y}{\rho_a(\rho_a + a_z)} + q_{ab}\frac {(a-b)_y}{\rho_{ab}[\rho_{ab} + (a-b)_z]}  \,, \nn
{\cal A}_y^{(a)} \ &=&\ q_a \frac {a_x}{\rho_a(\rho_a + a_z)} - q_{ab}\frac {(a-b)_x}{\rho_{ab}[\rho_{ab} + (a-b)_z]} \,, \nn
{\cal A}_x^{(b)} \ &=&\ -q_b \frac {b_y}{\rho_b(\rho_b - b_z)} + q_{ab}\frac {(b-a)_y}{\rho_{ab}[\rho_{ab} - (b-a)_z]}  \,, \nn
{\cal A}_y^{(b)} \ &=&\ q_b\frac {b_x}{\rho_b(\rho_b - b_z)} - q_{ab}\frac {(b-a)_x}{\rho_{ab}[\rho_{ab} - (b-a)_z]} \,, \nn
{\cal A}_z^{(a)} \ &=&\ {\cal A}_z^{(b)} \ =\ 0\,.
\ee
or the potential related to \p{asymA} by \p{gauge-nab}. In particular, the gauge transformations 
\be
F_a(\vecg{a}, \vecg{b}) = 2 q_a \arctan \frac {a_y}{a_x}, \quad
F_b(\vecg{a}, \vecg{b}) = 2 q_b \arctan \frac {b_y}{b_x}, \quad
F_{ab}(\vecg{a}, \vecg{b}) = 2 q_{ab} \arctan \frac {(a-b)_y}{(a-b)_x}
\ee
reverse the direction of one of the Dirac strings in \p{asymA}. The condition for $\exp\{iF(\vecg{a}, \vecg{b})\}$ to be uniquely defined leads to the requirement for $q_a, q_b$ and $q_{ab}$ to be  integer or half-integer.

\subsection{Theory including a decuplet}

As was mentioned above, there are many different chiral anomaly-free theories. For example, the sextet $\Phi^{(jk)}$ in \p{chiral-SU3} may be replaced by a decuplet $\Phi^{(jkl)}$. The generators $T^a$ in this representation normalized so that $[T^a, T^b] = if^{abc} T^c$ read 
\be
\lb{Ta-10}
(T^a)_{jkl}^{pmn} \ =\ \frac 16 \left[ (t^a)_i^p (\delta_j^m \delta_k^n + \delta_j^n \delta_k^m) + {\rm 8\ similar\ terms} \right].
\ee
One can derive  (see Appendix) that 
 \be
\lb{27}
 {\rm Tr} \{T^{(a} T^b T^{c)}\} \ =\  27 \, {\rm Tr} \{t^{(a} t^b t^{c)} \},
\ee
and one needs 27 triplets to compensate the anomaly of an antidecuplet. 

The effective Hamiltonian in this theory may be calculated in the same way as in the theory \p{chiral-SU3}. We will not describe here the details of this calculation referring the reader to Refs. \cite{Blok,Wit-book} and to the Appendix and will just quote the result.
To assure the algebra \p{algN2}, the effective supercharges  and the Hamiltonian must have the same structure \p{Q-chiral},  \p{H-chiral-nab} as 
for the theory \p{chiral-SU3}, and they do! 

Remarkably, also the functions $D_{(a), (b)}$  have a quite similar structure:
\be
\lb{Dab-10}
\!\!\!\!\!{\cal D}_{(a)} = \ \frac {27}2 \sum_{\vecind{n}}\left[\frac 1{|\vecg{a} - 4\pi \vecg{n}|} - \frac 1{|\vecg{a} - \vecg{b} - 4\pi \vecg{n}|}   \right] \ - \  \frac 12     \sum_{\vecind{n}} \left[\frac 1{\left|\vecg{a} - \frac{4\pi \vecind{n}}3 \right|} - \frac 1{\left|\vecg{a} - \vecg{b} - \frac {4\pi\vecind{n}}3\right|} \right] , \nn
\!\!\!\!\!{\cal D}_{(b)} = \ \frac {27}2 \sum_{\vecind{n}}\left[ \frac 1{|\vecg{a} - \vecg{b} - 4\pi \vecg{n}|}  - \frac 1{|\vecg{b} - 4\pi \vecg{n}|}\right]\ - \ \frac 12 \sum_{\vecind{n}}\left[ \frac 1{\left|\vecg{a} - \vecg{b} - \frac{4\pi\vecind{n}}3 \right|}  - \frac 1{\left|\vecg{b} - \frac{4\pi\vecind{n}}3\right|}\right].
\ee

\section{Other groups and other monopoles}

\subsection{$SU(4)$}
Consider a chiral supersymmetric  theory based on $SU(4)$ gauge group, which involves eight matter multiplets in fundamental representation [the quartets with the Dynkin labels  (1,0,0)] and also an antidecuplet $\Phi^{(jk)}$ with the Dynkin labels (0,0,2). This is a direct $SU(4)$ analog of the theory \p{chiral-SU3}. 

We need 8 quartets for one antidecuplet to cancel the anomaly. Indeed, a generalization of the relation \p{dabc-SU3} to higher $SU(N)$ groups
is
\be
\lb{dabc-SUN}
 {\rm Tr} \{ T^{(a} T^b T^{c)} \} \ = \ (N+4) \,{\rm Tr} \{t^{(a} t^b t^{c)} \}. 
\ee
Take a Cartan background 
\be
\lb{Cartan-SU4}
\vecg{A}^a t^a \ =\ \frac 12 \,{\rm diag} (\vecg{a}, \vecg{b} - \vecg{a}, \vecg{c} - \vecg{b}, - \vecg{c})
\ee

The calculation of the effective supercharges and Hamiltonian goes along the same lines as for $SU(3)$. Some details of this calculation are given in the Appendix. We derive:
\be
\lb{Q-chiral-SU4}
   \hat Q_\alpha^{\rm eff} \ &=& \  {\eta^{(a)}_\beta} \left[(\sigma_k)_\alpha{}^\beta (\hat P_k^{(a)} + {\cal A}^{(a)}_k) 
+i  \delta_\alpha^\beta {\cal D}_{(a)}    \right] \nn
&+& \   {\eta^{(b)}_\beta} \left[(\sigma_k)_\alpha{}^\beta (\hat P_k^{(b)} + {\cal A}^{(b)}_k) +i  \delta_\alpha^\beta {\cal D}_{(b)}    \right],  \nn
&+&  {\eta^{(c)}_\beta}\left[(\sigma_k)_\alpha{}^\beta (\hat P_k^{(c)} + {\cal A}^{(c)}_k) 
+i  \delta_\alpha^\beta {\cal D}_{(c)}    \right]
 \ee 
and similarly for $ \hat {\bar Q}^{\alpha\, {\rm eff}}$ expressed via $\bar \eta^{(a)\alpha}$ etc. 
Here
\be
\eta_\alpha^{(a)} \ &=&\ \eta_\alpha^3 + \frac {\eta_\alpha^8}{\sqrt{3}} + \frac {\eta_\alpha^{15}}{\sqrt{6}} \,, \nn
\eta_\alpha^{(b)} \ &=&\  2\left(\frac {\eta_\alpha^8}{\sqrt{3}} + \frac {\eta_\alpha^{15}}{\sqrt{6}}  \right)\,, \nn
\eta_\alpha^{(c)} \ &=&\  \frac {3\eta_\alpha^{15}}{\sqrt{6}}\,.
\ee
Note that
\be
\lb{anticom-eta-4}
&& \{\bar \eta^{(a)\beta}, \eta^{(a)}_\alpha\} = \{\bar \eta^{(c)\beta}, \eta^{(c)}_\alpha\} = \frac 32 \delta_\alpha^\beta,
 \quad \{\bar \eta^{(b)\beta}, \eta^{(b)}_\alpha\}  = 2 \delta_\alpha^\beta\,, \nn
 &&\{\bar \eta^{(a)\beta}, \eta^{(b)}_\alpha\} = \{\bar \eta^{(b)\beta}, \eta^{(a)}_\alpha\} = \{\bar \eta^{(b)\beta}, \eta^{(c)}_\alpha\} = \{\bar \eta^{(c)\beta}, \eta^{(b)}_\alpha\} = \delta_\alpha^\beta, \nn
&& \{\bar \eta^{(a)\beta}, \eta^{(c)}_\alpha\} = \{\bar \eta^{(c)\beta}, \eta^{(a)}_\alpha\} =  \frac 12\delta_\alpha^\beta\,.
 \ee
The functions ${\cal D}_{(a)}$, ${\cal D}_{(b)}$ and ${\cal D}_{(c)}$ represent the following sums:
 \be
\lb{Dabc}
\!\!\!\!\!\!\!&&{\cal D}_{(a)} \ =\ 4 \sum_{\vecind{n}}  \left(\frac 1{|\vecg{a} - 4\pi \vecg{n} |}  -  \frac 1{|\vecg{a} - \vecg{b} - 4\pi \vecg{n}|}   \right)
- \frac 12 \sum_{\vecind{n}}  \left(\frac 1{|\vecg{a} - 2\pi \vecg{n} |}  -  \frac 1{|\vecg{a} - \vecg{b} - 2\pi \vecg{n} |}   \right) , \nn
\!\!\!\!\!\!\ &&{\cal D}_{(b)} \ =\ 4 \sum_{\vecind{n}}  \left(\frac 1{|\vecg{a} - \vecg{b} - 4\pi \vecg{n} |} -  \frac 1{|\vecg{b} - \vecg{c}  - 4\pi \vecg{n} |}   \right) - \frac 12 \sum_{\vecind{n}}  \left(\frac 1{|\vecg{a} - \vecg{b} - 2\pi \vecg{n} |} -  \frac 1{|\vecg{b} - \vecg{c}  - 2\pi \vecg{n} |}   \right), \nn
\!\!\!\!\!\!\ &&{\cal D}_{(c)} \ =\ 4 \sum_{\vecind{n}}  \left(\frac 1{|\vecg{b}  - \vecg{c} - 4\pi \vecg{n} |} -  \frac 1{|\vecg{c}  - 4\pi \vecg{n} |}   \right) - \frac 12\sum_{\vecind{n}}  \left(\frac 1{|\vecg{b}  - \vecg{c} - 2\pi \vecg{n} |} -  \frac 1{|\vecg{c}  - 2\pi \vecg{n} |}   \right).
\ee
The curls of ${\cal A}^{(a)}_k, {\cal A}^{(b)}_k, {\cal A}^{(c)}_k$ are related to the gradients of ${\cal D}_{(a)},  {\cal D}_{(b)}, {\cal D}_{(c)}$ in the same way as in \p{cond-nab}:
\be
\lb{cond-nab-4}
{\cal H}^{(a)}_j &\equiv& \varepsilon_{jkl}\, \pd_k^a {\cal A}^{(a)}_l \ =\ \pd_j^a {\cal D}_{(a)}\,, \nn
{\cal H}^{(b)}_j &\equiv& \varepsilon_{jkl} \,\pd_k^b {\cal A}^{(b)}_l \ =\ \pd_j^b {\cal D}_{(b)}\,, \nn
{\cal H}^{(c)}_j &\equiv& \varepsilon_{jkl} \,\pd_k^c {\cal A}^{(c)}_l \ =\ \pd_j^c {\cal D}_{(c)}\,, \nn
\varepsilon_{jkl} {\cal H}^{(ab)}_j  \ &\equiv& \ \pd_k^a {\cal A}^{(b)}_l  - \pd_l^b {\cal A}^{(a)}_k  \ =\   \varepsilon_{jkl}  \pd_j^a {\cal D}_{(b)}  \ =\   \varepsilon_{jkl}  \pd_j^b {\cal D}_{(a)} \,, \nn
\varepsilon_{jkl} {\cal H}^{(bc)}_j  \ &\equiv& \ \pd_k^b {\cal A}^{(c)}_l  - \pd_l^c {\cal A}^{(b)}_k  \ =\   \varepsilon_{jkl}  \pd_j^b {\cal D}_{(c)}  \ =\   \varepsilon_{jkl}  \pd_j^c {\cal D}_{(b)}  \,, \nn
\varepsilon_{jkl} {\cal H}^{(ac)}_j  \ &\equiv& \ \pd_k^a {\cal A}^{(c)}_l  - \pd_l^c {\cal A}^{(a)}_k  \ =\   \varepsilon_{jkl}  \pd_j^a {\cal D}_{(c)}  \ =\   \varepsilon_{jkl}  \pd_j^c {\cal D}_{(a)} \ =\ 0  \,.
 \ee

Calculating the anticommutator $\{\hat {\bar Q}^{\alpha\, {\rm eff}}, \hat Q_\alpha^{\rm eff} \}$, one can derive an expression for the effective Hamiltonian. It represents a rather obvious generalization of   \p{H-chiral-nab}. For example, bearing in mind \p{anticom-eta-4}, one can write a contribution
  \be 
\hat H^{\rm eff}_{\cal D}  \ = \ \frac 32
\left({\cal D}_{(a)}^2 +  {\cal D}_{(c)}^2\right) +  2 {\cal D}_{(b)}^2  + 
 {\cal D}_{(a)}{\cal D}_{(b)} + {\cal D}_{(b)}{\cal D}_{(c)}  +  \frac 12 {\cal D}_{(a)} {\cal D}_{(c)} 
 \ee
Another contribution represents an analogous quadratic form of  $\hat P_k^{(a,b,c)} + {\cal A}^{(a,b,c)}_k$:
\be 
\lb{only-A}
&&\hat H^{\rm eff}_{\cal A}  \ = \ \frac 32 \left[ (\hat P_k^a + {\cal A}_k^{(a)})^2 + (\hat P_k^c + {\cal A}_k^{(c)})^2 \right] +
2 (\hat P_k^b + {\cal A}_k^{(b)})^2 \nn
&&+ \ (\hat P_k^a + {\cal A}_k^{(a)})(\hat P_k^b + {\cal A}_k^{(b)})  +  (\hat P_k^b + {\cal A}_k^{(b)})(\hat P_k^c + {\cal A}_k^{(c)})
+ \frac 12 (\hat P_k^a + {\cal A}_k^{(a)})(\hat P_k^c + {\cal A}_k^{(c)}).
 \ee

 And there are 6 bifermion terms including $\vecg{\cal H}^{(a)},
 \vecg{\cal H}^{(bc)}$, etc.

The Cartan monopole for the $su(4)$ algebra may be defined as a system with the Hamiltonian \p{only-A} where  the vector potentials are related to ${\cal D}_{(a)}$, ${\cal D}_{(b)}$ and ${\cal D}_{(c)}$,  given by 
 \be
\lb{Dabc-Cartan}
&&{\cal D}_{(a)} \ =\ q  \left(\frac 1{|\vecg{a} - \vecg{b} |}   - \frac 1{|\vecg{a}  |}     \right), \nn
&&{\cal D}_{(b)} \ =\ q  \left(\frac 1{|\vecg{b} - \vecg{c}   |} - \frac 1{|\vecg{a} - \vecg{b}  |}     \right) , \nn
&&{\cal D}_{(c)} \ =\  q\left( \frac 1{|\vecg{c} |}   - \frac 1{|\vecg{b}  - \vecg{c}|}   \right),
\ee
as in \p{cond-nab-4}. One can choose
\be
\lb{cornerA-abc}
{\cal A}_x^{(a)} \ &=&\ q \left( -\frac {a_y}{\rho_a(\rho_a + a_z)} + \frac {(a-b)_y}{\rho_{ab}[\rho_{ab} + (a-b)_z]} \right) \,, \nn
{\cal A}_y^{(a)} \ &=&\ q\left(  \frac {a_x}{\rho_a(\rho_a + a_z)} - \frac {(a-b)_x}{\rho_{ab}[\rho_{ab} + (a-b)_z]} \right)\,, \nn
{\cal A}_x^{(b)} \ &=&\ q \left( \frac {(c-b)_y}{\rho_{bc}(\rho_{bc} + (c- b)_z)} + \frac {(b-a)_y}{\rho_{ab}[\rho_{ab} + (a-b)_z]} \right) \,, \nn
{\cal A}_y^{(b)} \ &=&\ q \left( \frac {(b-c)_x}{\rho_{bc}(\rho_{bc} +(c-b)_z)} + \frac {(b-a)_x}{\rho_{ab}[\rho_{ab} +  (a-b)_z]}  \right)\,, \nn
{\cal A}_x^{(c)} \ &=&\ q \left( \frac {(b-c)_y}{\rho_{bc}(\rho_{bc} + (c- b)_z)} + \frac {c_y}{\rho_c[\rho_c +c_z]} \right) \,, \nn
{\cal A}_y^{(c)} \ &=&\ q \left( \frac {(c-b)_x}{\rho_{bc}(\rho_{bc} +(c-b)_z)} - \frac {c_x}{\rho_{ab}[\rho_c  +c_z]}  \right)\,, \nn
{\cal A}_z^{(a)} \ &=&\ {\cal A}_z^{(b)}\  =  \  {\cal A}_z^{(c)} \ =\ 0\,.
\ee

 The corresponding Lagrangian reads
 \be
L \ =\ \frac {11}{60}(\dot {\vecg{a}}^2 + \dot {\vecg{c}}^2) + \frac 7{48} \dot {\vecg{b}}^2 - \frac 1{12} \dot {\vecg{b}} \cdot
(\dot {\vecg{a}} + \dot {\vecg{c}}) - \frac 1{30} \dot {\vecg{a}} \cdot \dot {\vecg{c}} \nn 
- \dot {\vecg{a}} \cdot \vecg{\cal A}^{(a)} - \dot {\vecg{b}} \cdot \vecg{\cal A}^{(b)} - \dot {\vecg{c}} \cdot \vecg{\cal A}^{(c)} \,.
 \ee
It has the properties similar to the Lagrangian \p{L-Cartan}: it is invariant under the gauge transformations,
\be
\lb{gauge-nab-SU4}
\!\!\!\!\!\!{\cal A}_j^{(a)} \to  {\cal A}_j^{(a)} + \pd_j^a F(\vecg{a}, \vecg{b}, \vecg{c}), \quad {\cal A}_j^{(b)}  \to  {\cal A}_j^{(b)} + \pd_j^b F(\vecg{a}, \vecg{b},  \vecg{c}), \quad {\cal A}_j^{(c)}  \to  {\cal A}_j^{(c)} + \pd_j^c F(\vecg{a}, \vecg{b},  \vecg{c}),
\ee
and under the rotations,\footnote{This can be proven in exactly the same way as in the $su(3)$ case.}
\be
\lb{rot-nab3}
\delta a_j = \varepsilon_{jmn} \alpha_m a_n, \qquad \delta b_j = \varepsilon_{jmn} \alpha_m b_n, \qquad \delta c_j = \varepsilon_{jmn} \alpha_m c_n\,.
 \ee
The conserved angular momentum reads:
\be
\lb{Mom-SU4}
J_m \ &=&\ 
\varepsilon_{mnk} \left[ a_n (P^{(a)} + {\cal A}^{(a)})_k  + b_n (P^{(b)} + {\cal A}^{(b)})_k  +  c_n (P^{(c)} + {\cal A}^{(c)})_k \right]\nn 
&+&  \, a_m {\cal D}_{(a)} + \, b_m {\cal D}_{(b)} + \,c_m {\cal D}_{(c)}\,. 
\ee
The charge $q$ is quantized to be integer or half-integer, as it was the case for the ordinary monopole and for the $su(3)$ Cartan monopole. 

\subsection{$SU(5)$}
For any $SU(N)$ group, one can consider the chiral supersymmetric gauge theory with $N+4$ fundamental multiplets carrying the Dynkin labels $(1,0,\ldots,0)$ and a symmetric multiplet with the labels $(0,\ldots,0,2)$. The chiral anomaly cancels for this theory. The effective Hamiltonians and the related Cartan monopoles have the form similar to that in the cases $N=3,4$ treated above.

But there is another interesting (especially, for possible phenomenological applications) anomaly-free chiral theory, which  involves a quintet with the Dynkin labels $(1,0,0,0)$ and an antisymmetric decuplet with the Dynkin labels $(0,0,1,0)$. 
The effective supercharges and Hamiltonian were determined in \cite{Blok}. Here we only quote the results. 
The slow bosonic variables have the form
  \be
\lb{Cartan-SU5}
\vecg{A}^a t^a \ =\ \frac 12 \,{\rm diag} (\vecg{a}, \vecg{b} - \vecg{a}, \vecg{c} - \vecg{b}, \vecg{d}- \vecg{c}, - \vecg{d})
 \ee
The effective supercharge looks similar to \p{Q-chiral-SU4}, but it now includes an extra term $\propto \eta_\alpha^{(d)}$.

The functions ${\cal D}_{(a)},  {\cal D}_{(b)}, {\cal D}_{(c)}, {\cal D}_{(d)}$ are given by the following sums
\be
&&\!\!\!\!{\cal D}_{(a)}  = \frac 12 \sum_{\vecind{n}}\left[  \frac 1{|\vecg{a}_{\vecind{n}}|} + \frac 1 {|(\vecg{a} - \vecg{c})_{\vecind{n}|} } + \frac 1 
{|(\vecg{b}- \vecg{a} + \vecg{d}- \vecg{c}) _{\vecind{n}} |} +  \frac 1 {|(\vecg{b} - \vecg{a} - \vecg{d})_{\vecind{n}} | }  \right. \nn
&&\left.  - \frac 1{|(\vecg{a} - \vecg{b})_{\vecind{n}}|} - \frac 1 {|(\vecg{a} + \vecg{c} - \vecg{b})_{\vecind{n}}| } - \frac 1 {|(\vecg{a} + \vecg{d} - \vecg{c})_{\vecind{n}} |} - \frac 1 {|(\vecg{a} - \vecg{d})_{\vecind{n}} | } \right],\nn
&&\!\!\!\!{\cal D}_{(b)}  = \frac 12 \sum_{\vecind{n}}\left[  \frac 1{|(\vecg{a}- \vecg{b})_{\vecind{n}}|} + \frac 1 {|(\vecg{b} - \vecg{d})_{\vecind{n}|} } + \frac 1 
{|(\vecg{a}+ \vecg{c} - \vecg{b}) _{\vecind{n}} |} +  \frac 1 {|(\vecg{b} - \vecg{c} + \vecg{d})_{\vecind{n}} | }  \right. \nn
&&\left.  - \frac 1{|\vecg{b}_{\vecind{n}}|} - \frac 1 {|(\vecg{b} - \vecg{c})_{\vecind{n}}| } - \frac 1 {|(\vecg{b} - \vecg{a} - \vecg{d})_{\vecind{n}} |} - \frac 1 {|(\vecg{b} - \vecg{a} + \vecg{d} - \vecg{c})_{\vecind{n}} | } \right],\nn
&&\!\!\!\!{\cal D}_{(c)}  = \frac 12 \sum_{\vecind{n}}\left[  \frac 1{|\vecg{c}_{\vecind{n}}|} + \frac 1 {|(\vecg{b} - \vecg{c})_{\vecind{n}|} } + \frac 1 
{|(\vecg{a}- \vecg{c} + \vecg{d}) _{\vecind{n}} |} +  \frac 1 {|(\vecg{b} - \vecg{a} + \vecg{d} - \vecg{c})_{\vecind{n}} | }  \right. \nn
&&\left.  - \frac 1{|(\vecg{d} - \vecg{c})_{\vecind{n}}|} - \frac 1 {|(\vecg{a} - \vecg{c})_{\vecind{n}}| } - \frac 1 {|(\vecg{a} + \vecg{c} - \vecg{b})_{\vecind{n}} |} - \frac 1 {|(\vecg{c} - \vecg{b}- \vecg{d})_{\vecind{n}} | } \right],\nn
&&\!\!\!\!{\cal D}_{(d)}  = \frac 12 \sum_{\vecind{n}}\left[  \frac 1{|(\vecg{d}- \vecg{c})_{\vecind{n}}|} + \frac 1 {|(\vecg{d} - \vecg{a})_{\vecind{n}|} } + \frac 1 
{|(\vecg{c}- \vecg{b} - \vecg{d}) _{\vecind{n}} |} +  \frac 1 {|(\vecg{b} - \vecg{a} - \vecg{d})_{\vecind{n}} | }  \right. \nn
&&\left.  - \frac 1{|\vecg{d} _{\vecind{n}}|} - \frac 1 {|(\vecg{b}  - \vecg{d})_{\vecind{n}}| } - \frac 1 {|(\vecg{a} + \vecg{d} - \vecg{c})_{\vecind{n}} |} - \frac 1 {|(\vecg{b} - \vecg{a} +\vecg{d} - \vecg{c})_{\vecind{n}} | } \right],
\ee
wihere $\vecg{a}_{\vecind{n}} = \vecg{a} - 4\pi \vecg{n}$. In this case, the elementary cell of our 12-dimensional crystal includes only one node: the antisymmetric decuplet has only mixed components, which have the same periodicity pattern as the quintet components. The anticommutator of the supercharges gives the Hamiltonian, which has a structure analogous to the structure of $SU(3)$ and $SU(4)$ effective  Hamiltonians, but  has more terms.  For example, the scalar potential is now given by the following quadratic form:
 \be
&\hat H^{\rm eff}_{\cal D}  = \frac 15 
\left[4\left({\cal D}_{(a)}^2 +  {\cal D}_{(d)}^2\right) +  6\left({\cal D}_{(b)}^2 +  {\cal D}_{(c)}^2\right) + 6\left({\cal D}_{(a)} {\cal D}_{(b)} + {\cal D}_{(c)} {\cal D}_{(d)}\right)\right. \nn
& \left.+  \ 4\left({\cal D}_{(a)} {\cal D}_{(c)} + {\cal D}_{(b)} {\cal D}_{(d)}\right) + 2{\cal D}_{(a)} {\cal D}_{(d)} + 8{\cal D}_{(b)} {\cal D}_{(c)}\right]\,.
 \ee
It would be interesting to study the dynamics of the corresponding Cartan monopole.

\subsection{$Spin(4n+2)$ and $E_6$}

Similar constructions exist for all groups admitting complex representations: for all higher unitary groups, for $Spin(4n+2)$ and for $E_6$. The specific of $Spin(4n+2)$ and  $E_6$ theories is that they are free from anomalies for {\it any} matter content, and we can consider only the theories including a single matter supermultiplet. For example, a fundamental spinor 16-plet in $SO(10)$ \cite{Frich} or a fundamental 27-plet in $E_6$ \cite{Gursey}. The reason is that, though these representations are complex, the tensor $d^{abc} = {\rm Tr}\{T^{(a} T^b T^{c)}\} $ {\it vanishes} for these groups. 

The simplest way to see this is to inspect  Tables 44, 48 in the review \cite{Slansky}. The symmetrized tensor products of the adjoint representations in $SO(10)$ and 
$E_6$ are 
\be
SO(10): &\qquad&  ({\bf 45} \times {\bf 45} )_s  \ =\  {\bf 1} + {\bf 54} + {\bf 210} + {\bf 770}\,, \nn
E_6: &\qquad&  ({\bf 78} \times {\bf 78} )_s  \ =\  {\bf 1} + {\bf 650} + {\bf 2430}\,.
 \ee
Adjoint representations are absent in these expansions, which means that $T^{(a}T^b T^{c)}$ is not a singlet and has a zero trace.

 For orthogonal groups, one can give a simple mathematical reason for this fact \cite{Ilia}.

 Consider the form $F = {\rm Tr} \{X^3 \}$ defined on a Lie algebra $\mathfrak{g}$. It is invariant under the action of the group. Expanding $X = X_a t^a$, one can represent it as
 $$ F \ =\ \frac 14 d^{abc} X_a X_b X_c\,.$$
 Now we can  restrict this form to $X = X_{\tilde a}t^{\tilde a}$, $\tilde a = 1,\ldots,r$, belonging to the Cartan subalgebra of  $\mathfrak{g}$. Then $\tilde F = {\rm Tr} \{(X_{\tilde a} t^{\tilde a})^3\} \propto d^{\tilde a \tilde b \tilde c} X_{\tilde a} X_{\tilde b} X_{\tilde c}$  is invariant under the action of the Weyl group. 
For example, for  the Cartan subalgebra element $X = {\rm diag}(\alpha, \beta, \gamma = -\alpha - \beta)$ of $su(3)$, 
$\tilde F = \alpha^3 + \beta^3 + \gamma^3$, which is
  invariant under the permutation of $\alpha, \beta, \gamma$. Similarly, for higher unitary groups. 

The existence of a Weyl-invariant homogeneous cubic polynomial defined on the Cartan subalgebra is a necessary and sufficient condition for the trace  Tr$\{T^{(a} T^b T^{c)}\} $ to be nonzero.

Consider now the algebra $so(10)$ and its Cartan subalgebra. An element of the latter is a 5-dimensional vector $x_j$. The Weyl group includes the permutations of $x_j$ and also the reflections when a {\it couple} of components of  $x_j$ changes simultaneously the sign. But
a nonzero cubic polynomial of $x_j$ invariant under all the Weyl transformations does not exist. Indeed, 
to be invariant under the reflections that change the sign of any pair of $x_2, x_3, x_4, x_5$, but not of $x_1$, such a  polynomial should have the form
$P_3(x_j) = x_1 \sum_{j=2}^5 a_j x_j^2$. But such a polynomial is not invariant under permutations.  

\vspace{1mm}

A more heuristic argument in favor of the absence of the chiral anomaly  in $SO(10)$ and $E_6$ comes from the branching rules for their multiplets under embeddings $E_6 \supset SO(10) \times U(1)$ and $SO(10) \supset SU(5) \times U(1)$. The fundamental 27-plet of $E_6$ is split into a spinor, vector and singlet of $SO(10)$, ${\bf 27}  \rightarrow {\bf 16} + {\bf 10} + {\bf 1}$, and the 16-plet of $SO(10)$ is further split into a quintet, antidecuplet and singlet of $SU(5)$.
And we have already learned that the $SU(5)$ theory with such a matter content is anomaly-free.

The absence of anomalies in the theories based on $SO(10)$ and $E_6$ led people to consider these theories including each a single fundamental matter multiplet as candidats for the theories of grand unification
\cite{Frich,Gursey}. It would be  interesting to evaluate the effective Hamiltonians for the supersymmetric version of these theories in the same spirit as we did it for the unitary groups, to study them and find out in the end of the day whether supersymmetry is spontaneously broken there or not.

We are indebted to K. Konishi and I. Smilga for illuminating discussions.

\vspace{1cm}

\appendix

\section{Evaluation of effective Hamiltonians}
\renewcommand{\theequation}{A.\arabic{equation}}
\setcounter{equation}0

\subsection*{Decuplet contribution in $SU(3)$ theory}
Let us first derive \p{27}. One can do it quite directly, using the representation \p{Ta-10}, but there exist a simpler and nicer way \cite{Konishi}.
To find the coefficient in (3.24), it suffices to calculate Tr$\{(T^8)^3\}$ and compare it with Tr$\{(t^8)^3\}$.  It is convenient to choose the normalization
$t^8 = {\rm diag}(1,1,-2)$ and the corresponding normalization for $T^8$ in the higher representations.   Consider the embedding $SU(2) \times U(1) \subset SU(3)$ such that $T^8$ is projected on the generator of the $U(1)$ subgroup. The branching rules for the triplet, sextet and decuplet are:
  \be
{\bf 3}  \ &\rightarrow& \ {\bf 2} (1) + {\bf 1} (-2), \nn
{\bf 6}  \ &\rightarrow& \ {\bf 3} (2) +  {\bf 2}(-1) + {\bf 1} (-4), \nn
{\bf 10}  \ &\rightarrow& \ {\bf 4} (3) + {\bf 3}(0) + {\bf 2}(-3) + {\bf 1} (-6), 
 \ee  
where {\bf 2,3,4} are the $SU(2)$ multiplets and the numbers in the parentheses are the $U(1)$ charges.  Then Tr$\{(t^8)^3\} = 2\cdot 1^3 + 1\cdot (-2)^3 = -6$. For the sextet and decuplet we have
 \be
{\rm Tr}_{\bf 6} \{(T^8)^3 \} \ &=&\  3 \cdot 2^3 + 2 \cdot (-1)^3 + 1 \cdot (-4)^3  = -42 \ =\ 7  \, {\rm Tr}\{(t^8)^3\}, \nn
{\rm Tr}_{\bf 10} \{(T^8)^3 \}  \ &=& \  4 \cdot 3^3 + 2 \cdot (-3)^3 + 1 \cdot (-6)^3  = -162 \ =\ 27 \,  {\rm Tr}\{(t^8)^3\}.
\ee
For an arbitrary symmetric representation $(p,0)$,
\be
{\rm Tr}_p \{(T^8)^3 \} \ =\  \frac{p(p+1)(p+2)(p+3)(2p+3)}{120}\, {\rm Tr}\{(t^8)^3\}\,.
 \ee

\vspace{1mm}

As was mentioned, the effective supercharges and Hamiltonian have the form \p{Q-chiral}, \p{H-chiral-nab}, and the only problem is to evaluate the contribution of the matter multiplets in ${\cal D}_{(a)}$ and ${\cal D}_{(b)}$. The contribution of a triplet and a sextet was found in \cite{Blok}, and these calculations were spelled out in detail in \cite{Wit-book}, where we refer the reader. For a triplet, it was
\be
\lb{Dab-3}
\!\!\!\!\!{\cal D}_{(a)}({\rm tripl}) &=& \frac 12 \sum_{\vecind{n}}\left[\frac 1{|\vecg{a} - 4\pi \vecg{n}|} - \frac 1{|\vecg{a} - \vecg{b} - 4\pi \vecg{n}|}   \right]   , \nn
\!\!\!\!\!{\cal D}_{(b)}({\rm tripl}) &=& \frac 12 \sum_{\vecind{n}}\left[ \frac 1{|\vecg{a} - \vecg{b} - 4\pi \vecg{n}|}  - \frac 1{|\vecg{b} - 4\pi \vecg{n}|}\right]  \,,
\ee 
where the individual terms in these sums came from the matter Fourier modes $\propto e^{2\pi i \vecind{n} \cdot \vecind{x}/L}$. The antisextet contribution was 
\be
\lb{Dab-6}
\!\!\!\!\!{\cal D}_{(a)}({\rm sext}) =\  \sum_{\vecind{n}}\left[\frac 1{|\vecg{a} - 4\pi \vecg{n}|} - \frac 1{|\vecg{a} - \vecg{b} - 4\pi \vecg{n}|}   \right]  - \frac 12 \sum_{\vecind{n}}\left[\frac 1{|\vecg{a} - 2\pi \vecg{n}|} - \frac 1{|\vecg{a} - \vecg{b} - 2\pi \vecg{n}|}   \right]  , \nn
\!\!\!\!\!{\cal D}_{(b)} ({\rm sext}) = \ \sum_{\vecind{n}}\left[ \frac 1{|\vecg{a} - \vecg{b} - 4\pi \vecg{n}|}  - \frac 1{|\vecg{b} - 4\pi \vecg{n}|}\right]
 - \frac 12 \sum_{\vecind{n}}\left[\frac 1{|\vecg{a} - 2\pi \vecg{n}|} - \frac 1{|\vecg{a} - \vecg{b} - 2\pi \vecg{n}|}   \right]\,.
\ee 
Here the terms in the first sums come from the mixed sextet component $\phi^{12}, \phi^{13}, \phi^{23}$ while the terms in the second sums come from the components $\phi^{11}(\vecg{x}), \phi^{22}(\vecg{x}), \phi^{33}(\vecg{x})$, which, in contrast to the mixed components and the triplet fields, make ``two turns", being multiplied by $e^{\pm 4\pi i x}$ rather than by $e^{\pm 2\pi i x}$ to compensate the gauge shifts
 $a_x \to a_x+ 4\pi$ or   $b_x \to b_x + 4\pi $. Multiplying \p{Dab-3} by 7 and adding \p{Dab-6}, we arrive at \p{Dab}.

To determine the contribution of the decuplet, we first find the part of the Hamiltonian describing the interaction of the slow Abelian gauge fields \p{Cartan} with the fast scalar decuplet fields. 

Let us  first concentrate on the zero Fourier modes of the decuplet. The kinetic term of the fast Hamiltonian reads
 \be
\lb{Hkin}
\hat H^{\rm kin} \ &=&\ |\hat \pi^{111}|^2 +  |\hat \pi^{222}|^2 +  |\hat \pi^{333}|^2  \nn
&+& \, 3 \left(  |\hat \pi^{112}|^2 +  |\hat \pi^{122}|^2 +  |\hat \pi^{113}|^2 +  |\hat \pi^{133}|^2 +  |\hat \pi^{223}|^2 +  |\hat \pi^{233}|^2  \right) 
 + 6 |\hat \pi^{123}|^2\,,
 \ee
where $\hat \pi^{111} = -i \pd/\pd \phi^{111}$, etc.
After some calculation, we derive also the potential part:
\be
\lb{Hpot}
V \ &=&\ \frac 94 \left[\vecg{a}^2 | \phi^{111}|^2   + 
(\vecg{a} - \vecg{b})^2  | \phi^{222}|^2 + \vecg{b}^2  | \phi^{333}|^2  \right] + \frac 34 (\vecg{a} + \vecg{b})^2 \left[ | \phi^{112}|^2
+  | \phi^{332}|^2    \right] \nn
&+&\ \frac 34 (2\vecg{a} - \vecg{b})^2 \left[ | \phi^{113}|^2
+  | \phi^{223}|^2    \right] +  \frac 34 (2\vecg{b} - \vecg{a})^2 \left[ | \phi^{221}|^2
+  | \phi^{331}|^2    \right] \,.
 \ee
The same can be done for the Hamiltonian describing the interaction with nonzero Fourier modes. In the same way as in the sextet case, this Hamiltonian involves the mixed components like $\phi^{112}(\vecg{x}) $ making one turn to compensate the gauge shift $a_x \to a_x + 4\pi$, etc. and also the components
$\phi^{111}(\vecg{x}) $, $\phi^{222}(\vecg{x}) $, $\phi^{333}(\vecg{x}) $ making {\it three} turns.
As we see, no potential is generated along the direction $\phi_{123}$. (The existence of such direction is a specific of the decuplet problem.)  In nine other directions (for each $\vecg{n}$), 
the Hamiltonian $\hat H^{\rm kin} + V$ has an oscillator form. From that one can determine the vacuum averages: 
 \be
\lb{srednie}
 &&\langle|\phi^{111}_{\vecind{n}}|^2 \rangle  \ =\   \frac 1{3\left|\vecg{a} - \frac {4\pi}3 \vecg{n} \right|} , \qquad  \langle|\phi^{222}_{\vecind{n}}|^2 \rangle  \ =\  \frac 1{3\left|\vecg{a} - \vecg{b} - \frac {4\pi}3 \vecg{n} \right|} , \qquad \langle|\phi^{333}_{\vecind{n}}|^2 \rangle  \ =\   \frac 1{3\left|\vecg{b} - \frac {4\pi}3 \vecg{n} \right|} , \nn
&&\langle|\phi^{112}_{\vecind{n}}|^2 \rangle \ =\  \langle|\phi^{332}_{\vecind{n}}|^2\rangle   \ =\  \frac 1{\left|\vecg{a} +\vecg{b} -  4\pi \vecg{n} \right|} , \qquad 
\langle|\phi^{113}_{\vecind{n}}|^2\rangle \ =   \ \langle|\phi^{223}_{\vecind{n}}|^2\rangle  \  = \ \frac 1{\left|2\vecg{a} -\vecg{b} -  4\pi \vecg{n} \right|} , \qquad  \nn
&&\langle|\phi^{221}_{\vecind{n}}|^2\rangle \ =\  \langle|\phi^{331}_{\vecind{n}}|^2\rangle   \ =\  \frac 1{\left|2\vecg{b} - \vecg{a} -  4\pi \vecg{n} \right|}\,.
\ee
The effective supercharges can be derived by averaging the full supercharges $\hat Q_\alpha, \hat {\bar Q}^\alpha$ over the fast vacuum. We have 
\be
\lb{Q-full}
\hat Q_\alpha \ =\ i\int d\vecg{x} \left\{\eta_\gamma^3 \left[ -(\sigma_k)_\alpha{}^\gamma \frac \pd {\pd A^3_k} + \delta_\alpha^\gamma (s^*_f t^3 s_f -  \phi^* T^3 \phi) \right]   \right. \nn
+ \left. \eta_\gamma^8 \left[ -(\sigma_k)_\alpha{}^\gamma \frac \pd {\pd A^3_k} + \delta_\alpha^\gamma (s^*_f t^8 s_f  -   \phi^* T^8 \phi) \right]  \right\} \ + \ \hat Q_\alpha^{\rm fast} \,.
 \ee
Here $\hat Q_\alpha^{\rm fast} $ is the fast supercharge. It gives zero acting on the ground state of $\hat H^{\rm fast}$. Now, $s_{f = 1, \ldots, 27}$
are triplet scalar fields, $\phi^{jkl}$ are the decuplet fields and the generators $T^{3,8}$ were written in Eq. \p{Ta-10}. Plugging in \p{Q-full} the decuplet scalar averages \p{srednie}, the scalar averages of seven triplets,
\be
\langle |s_1^f|^2 \rangle \ =\ \frac 1{|\vecg{a} - 4\pi \vecg{n}|}, \qquad \langle |s^f_2|^2 \rangle \ =\ \frac 1{|\vecg{a} - \vecg{b} - 4\pi \vecg{n} |}, \qquad \langle |s^f_3|^2 \rangle \ =\ \frac 1{|\vecg{b} - 4\pi \vecg{n}| }  \,,
\ee
and adding the contributions from the averages of $\pd/\pd A_k^3$ and $\pd/\pd A_k^8$ over the full fast ground state wave function,\footnote{These contributions bring about the ``vector potentials" ${\cal A}_k^{(a)}$ and ${\cal A}_k^{(b)}$. }
we arrive at \p{Q-chiral} with $D_{(a), (b)}$ given in \p{Dab-10}.

Note that the effective supercharge does not acquire a contribution from the average $|\phi^{123}|^2$, which we could not determine in the above analysis.
Note also that, in contrast to what we had for the sextet, the contributions of the mixed components of $\phi^{jkl}$ with the same periodicity pattern as the triplet fields {\it cancel} so that the first sums in \p{Dab-10} are due solely to the triplets while the second sums are due solely to the decuplet. 

\subsection*{$SU(4)$ with eight quartets and a symmetric antidecuplet}

This theory was not treated before, and we give here a little more calculational details.

 Consider first the contribution of a quartet. The fast Hamiltonian describing the interaction of the  slow background \p{Cartan-SU4} with the zero Fourier modes of the scalar quartet components reads 
 \be
\hat H^{\rm fast}_{\rm quart} \ =\  \hat \pi^\dagger_j \hat \pi_j + s^* (\vecg{A}^a t^a )^2 s   \nn
= \ - \frac {\pd^2}{\pd s^{j*} \pd s_j} + \frac 14[ \vecg{a}^2 |s_1|^2 + (\vecg{a} - \vecg{b})^2 |s_2|^2 + (\vecg{b} - \vecg{c})^2 |s_3|^2
+ \vecg{c}^2 |s_4|^2\,.
\ee
This gives the averages  
\be 
\left \langle |s_1|^2  \right \rangle \ =\ \frac 1{|\vecg{a}|}, \quad \left \langle |s_2|^2  \right \rangle \ =\ \frac 1{|\vecg{a} - \vecg{b}|}, \quad 
\left \langle |s_3|^2  \right \rangle \ =\ \frac 1{|\vecg{b} - \vecg{c}|}, \quad \left \langle |s_4|^2  \right \rangle \ =\ \frac 1{|\vecg{c}|}\,.
\ee
The averages of the other Fourier modes are given by the similar expressions with $\vecg{a} \to \vecg{a} - 4\pi \vecg{n}$ etc.

The  contribution of a quartet to the effective supercharge $\hat Q_\alpha^{\rm eff}$ is
\be
 i \eta^3_\alpha \, s^* t^3 s +  i \eta^8_\alpha \, s^* t^8 s +  i \eta^{15}_\alpha \, s^* t^{15} s \ =\ i\left(\eta_\alpha^{(a)}
 {\cal D}^{\rm quart}_{(a)} +   \eta_\alpha^{(b)} {\cal D}^{\rm quart}_{(b)}   +\eta_\alpha^{(c)} {\cal D}^{\rm quart}_{(c)}\right)
\ee
with 
\be
\lb{D-quart}
{\cal D}^{\rm quart}_{(a)} \  &=& \ \frac 12 \sum_{\vecind{n}} \left(\frac 1{|\vecg{a} - 4\pi \vecg{n}|} -  \frac 1{|\vecg{a} - \vecg{b} - 4\pi\vecg{n}|} \right), \nn 
{\cal D}^{\rm quart}_{(b)}  \ &=&\  \frac 12 \left(\frac 1{|\vecg{a} - \vecg{b}- 4\pi \vecg{n}|} -  \frac 1{|\vecg{b} - \vecg{c}- 4\pi \vecg{n}|} \right), \nn
{\cal D}^{\rm quart}_{(c)} \ &=& \ \frac 12 \left(\frac 1{|\vecg{b} - \vecg{c}- 4\pi \vecg{n}|} -  \frac 1{|\vecg{c}- 4\pi \vecg{n}|} \right).
\ee  

The fast Hamiltonian for the scalar antidecuplet (its zero Fourier mode) interacting with the Cartan background is 
\be
\lb{}
\hat H^{\rm fast}_{\rm dec} \ =\ \hat \pi^\dagger_{ij} \pi^\dagger_{ij} + \vecg{a}^2 |\phi_{11}|^2 + (\vecg{a} - \vecg{b})^2 |\phi_{22}|^2
+ (\vecg{b} - \vecg{c})^2 |\phi_{33}|^2 +  \vecg{c}^2  |\phi_{44}|^2 \nn
+ \frac 12\left[ \vecg{b}^2 (|\phi_{12}|^2 + |\phi_{34}|^2) + (\vecg{a} + \vecg{c} - \vecg{b})^2  (|\phi_{13}|^2 + |\phi_{24}|^2) +  (\vecg{a} - \vecg{c})^2  |\phi_{14}|^2 + |\phi_{23}|^2) \right],
\ee
giving the averages
\be
|\phi_{11}|^2 = \frac 1{2|\vecg{a}|}, \quad |\phi_{22}|^2 = \frac 1{2|\vecg{a} - \vecg{b}|}, \quad 
|\phi_{33}|^2 = \frac 1{2|\vecg{b} - \vecg{c}|}, \quad |\phi_{44}|^2 = \frac 1{2|\vecg{c}|}, \nn
|\phi_{12}|^2 =  |\phi_{34}|^2  =  \frac 1{|\vecg{b}|}, \quad   |\phi_{23}|^2 =  |\phi_{14}|^2  =  \frac 1{|\vecg{a} - \vecg{c}|}, \quad
|\phi_{13}|^2 =  |\phi_{24}|^2  =  \frac 1{|\vecg{b} - \vecg{a} - \vecg{c}|}\,.      
\ee
The induced ${\cal D}_{(a,b,c)}$ are

\be
\lb{D-dec4}
{\cal D}^{\rm zero\ antidecuplet\ modes}_{(a)} \  &=& |\phi_{22}|^2  - |\phi_{11}|^2  \ = \  \frac 12 \left(\frac 1{|\vecg{a} - \vecg{b}|} - \frac 1{|\vecg{a}|} \right),\nn
{\cal D}^{\rm zero\ antidecuplet\ modes}_{(b)} \  &=& |\phi_{33}|^2  - |\phi_{22}|^2  \ = \  \frac 12 \left(\frac 1{|\vecg{b} - \vecg{c}|} - \frac 1{|\vecg{a} - \vecg{b}|} \right),\nn
{\cal D}^{\rm zero\ antidecuplet\ modes}_{(c)} \  &=& |\phi_{44}|^2  - |\phi_{33}|^2  \ = \  \frac 12 \left(\frac 1{|\vecg{c}|} - \frac 1{|\vecg{b} - \vecg{c}|}\right),
\ee
while the contributions of the mixed components exactly cancel!

The contributions of the other Fourier modes have the same form with $\vecg{a} \to \vecg{a}  - 2\pi \vecg{n}$ etc. (the components like $\phi^{11}(\vecg{x})$ rotate 2 times faster than the quartet components).  

Adding the antidecuplet contribution to \p{D-quart} taken with the factor 8, we arrive at 
\p{Dabc}.

\end{document}